\newcommand{\const}{d}
\newcommand{\perfratio}{performance ratio}
\newcommand{\alg}{{\ensuremath{\mathbb{A}}}\xspace}
\newcommand{\algprime}{{\ensuremath{\mathbb{A}'}}\xspace}
\newcommand{\algprimeprime}{{\ensuremath{\mathbb{A}''}}\xspace}
\newcommand{\opt}{\ensuremath{\text{OPT}}\xspace}
\newcommand{\OPT}{\ensuremath{\text{OPT}}\xspace}
\newcommand{\cratio}[1]{\ensuremath{\text{CR}_{#1}}\xspace}
\newcommand{\Litem}{\text{item of size $L$}\xspace}
\newcommand{\Sitem}{\text{item of size $S$}\xspace}
\newcommand{\Ls}{items of size $L$\xspace}
\newcommand{\Ss}{items of size $S$\xspace}
\newcommand{\SC}{\ensuremath{\text{countS}}\xspace}
\newcommand{\LC}{\ensuremath{\text{countL}}\xspace}
\newcommand{\CS}{\ensuremath{\text{countS}^\prime}\xspace}
\newcommand{\CL}{\ensuremath{\text{countL}^\prime}\xspace}
\newcommand{\SB}{\ensuremath{\text{S-bins}}\xspace}
\newcommand{\LB}{\ensuremath{\text{L-bins}}\xspace}
\newcommand{\Spr}{\ensuremath{\text{S-bins}^\prime}\xspace}
\newcommand{\Lpr}{\ensuremath{\text{L-bins}^\prime}\xspace}
\newcommand{\SBt}{\ensuremath{\text{late-S-bins}}\xspace}
\newcommand{\SBtpr}{\ensuremath{\text{late-S-bins}^\prime}\xspace}
\newcommand{\LBt}{\ensuremath{\text{late-L-bins}}\xspace}
\newcommand{\LBtpr}{\ensuremath{\text{late-L-bins}^\prime}\xspace}
\newcommand{\newalg}{\ensuremath{\text{2-Phase-Packer}}\xspace}
\newcommand{\maxbinsize}{\ensuremath{M}\xspace}
\newcommand{\tab}{\hspace*{6mm}}
\newcommand{\nats}{{\mathbb N}}
\newtheorem{observation}{Observation}
\begin{document}
\markboth{Joan Boyar, Faith Ellen}
{Tight Bounds for Restricted Grid Scheduling}

\title{Tight Bounds for Restricted Grid Scheduling\footnote{A preliminary 
version
of this paper, entitled ``Bounds for Scheduling Jobs on Grid Processors'',
appeared in {\em  Space-Efficient Data Structures, Streams, 
and Algorithms}, Lecture
Notes in Computer Science, volume 8066, 2013, pages 12-26.}}

\author{Joan Boyar}

\address{Department of Mathematics and Computer Science,
University of Southern Denmark,\\
Campusvej 55, DK-5230 Odense~M, Denmark\\
\email{joan@imada.sdu.dk}
}

\author{Faith Ellen}
\address{Department of Computer Science, University of Toronto,\\
10 King's College Road, Toronto, Ontario,
Canada M5S~3G4\\
\email{faith@cs.toronto.edu}
}

\maketitle

\begin{history}
\received{(Day Month Year)}
\accepted{(Day Month Year)}
\comby{(xxxxxxxxxx)}
\end{history}

\begin{abstract}
The following
problem is considered:
Items with integer sizes are given and variable sized
bins arrive online. A bin must be used if there is still
an item remaining which fits in it when the bin arrives.
The goal is to minimize the total size of all the bins used.
Previously, a lower bound of $\frac{5}{4}$
on the competitive ratio of this problem was achieved using
items of size $S$ and $2S-1$.
For these
item sizes
and maximum bin size $M=4S-3$, 
we obtain asymptotically matching upper
and lower bounds, which vary depending on the ratio of the
number of small items to the number of large items.
\end{abstract}

\keywords{
Online algorithms; variable-sized bin packing; Restricted Grid Scheduling.}

\section{Introduction}
In the classical online bin packing problem, bins of unit size
are given and items of varying size, each at most $M$,  arrive online. This is reversed
in the Grid Scheduling
problem~\cite{BF10}, where items and their sizes are given
and bins of varying size arrive online. In this problem,
a bin must be used if there is at least one unpacked item that fits in it.
As in the classical problem, the items must all be packed in the bins, so that 
the total size of the items packed in a bin is no more than the size of the bin.
The goal is to minimize the sum of the sizes of bins used to pack the items:
Bins which were unused because no unpacked items fit in them do not 
contribute to this total.
It is assumed that there are plenty of bins that are large enough to hold the 
largest items, so that feasibility is not an issue.
 
The Grid Scheduling problem is known to have competitive ratio at least $\frac{5}{4}$~\cite{BF10}.
This was achieved using items of size $S$ and $L = 2S-1$,
for integers $S\geq 2$. 
In this paper, we consider the Restricted Grid Scheduling problem,
where only these item sizes are allowed and the maximum possible bin size
is $M=4S-3$.
For this restricted problem, we obtain asymptotically matching upper and lower bounds, which depend
on the ratio of $s$, the number of small  items, to $\ell$, the number of large
items. Specifically, if $s$ items of size $S > 1$ and $\ell$ items of size $L=2S-1$ are given
and the maximum bin size is $M = 4S-3$, the competitive ratio for this problem is
$$
\begin{array}{ll}
1 + \frac{1}{2 + s/\ell}
& \mbox{ if }  \ell \leq s/2,\\

1 + \frac{1}{4}
& \mbox{ if }  s/2 < \ell \leq 5s/6, \mbox{ and}\\

1 + \frac{2}{3 + 6\ell/s}
& \mbox{ if }  5s/6< \ell.
\end{array}
$$
This bound is plotted in Figure \ref{graph}.

\begin{figure}[t]
\begin{center}
\begin{tikzpicture}[scale=3]
\draw [<->] (0,1.3) node [above] {\begin{tabular}{c}competitive \\ ratio\end{tabular}} -- (0,0) node [below] {0} -- (2.0,0)
node [right] {$\ell/s$};
\draw[very thick,domain=0:0.5] plot (\x, {1.0+\x/(2*\x+1)});
\draw[thin] (-0.01,1.25) node [left] {1.25} -- (0.01,1.25);
\draw[thin] (-0.01,1.00) node [left] {1.00} -- (0.01,1.00);
\draw[thick,dashed] (0.5,1.25) -- (0.5,0.0) node [below] {1/2};
\draw[very thick,domain=0.5:0.8333] plot (\x, {1.25});
\draw[thick,dashed] (0.8333,1.25) -- (0.8333,0.0) node [below] {5/6};
\draw[very thick,domain=0.8333:2.0] plot (\x, {1.0+2/(3+6*\x)});
\end{tikzpicture}
\end{center}
\caption{The competitive ratio of the Restricted Grid Scheduling  problem as a function of $\ell/s$}
\label{graph}
\end{figure}
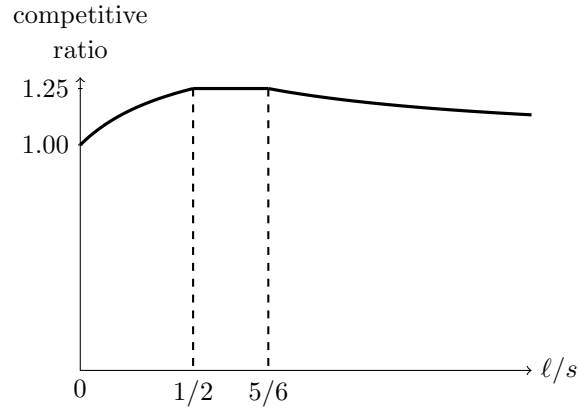

{\bf The Grid Scheduling Problem}
The Grid Scheduling problem~\cite{BF10,BF11,EFL11}
was proposed in an article concerning a Grid architecture~\cite{Vinter05}.
The proposed application in that article concerned the
well-known bioinformatics program BLAST. More generally, one considers
a large problem that
is divided up into independent jobs, which may be of
varying sizes.
These jobs are distributed to heterogeneous processors  (for example, personal
computers) having idle time.
These processors become available in an on-line manner, as they become idle. 
Folding@Home (in which idle processors are used to find new ways to fold proteins)
and SETI@home (in which idle processors are used to look for signs of extraterrestrial life)
are two other examples.
Such systems have been called
Grids,
denoting computation distributed seamlessly over (possibly) large distances,
in the same way that electricity is distributed over the electrical grid.
It does not refer to computation in which the network topology
is a rectangular grid.
Note that other researchers have used the terminology ``Grid scheduling''
to denote other problems which are very different from the one considered here.

It is well-known that paging may slow down computation drastically. In fact,
parallelizing jobs can result in
superlinear speed-up by eliminating unnecessary paging~\cite{DCF03}.
Paging can be avoided by assigning jobs to a processor whose memory capacity is at least
the combined memory requirements of those jobs.
In the Grid Scheduling problem,
the item sizes are the memory requirements of the jobs and the bin
sizes are the memory capacity of the processors. 
To finish the jobs as quickly as possible, any processor that arrives and can process
at least one of the remaining jobs should be assigned to it (rather than waiting for a processor
of close to the right size, which might never come).
For the Grid Scheduling problem,
this condition is equivalent to 
requiring that any arriving bin should be used, if there is an unpacked item that fits into it.
We prove, in fact, that items should be packed into an arriving bin until there are no more
items that can fit into it.
Allowing bins which are smaller than the maximum item size reflects
the situation where some processors might not have enough memory to handle
the largest jobs.

{\bf Related Work}
In~\cite{BF10}, Boyar and Favrholdt
presented an algorithm to solve the Grid Scheduling problem with
competitive ratio $\frac{13}{7}$. 
They also proved that the competitive ratio 
of any algorithm for the Grid Scheduling problem
is at least $\frac{5}{4}$.
In that proof, the adversary used only
two different item sizes, $ S$ and $L=2S-1$, and four different bin sizes,
$ S$, $2S-1$, $2S$, and $4S-4 $, for an integer $S > 1$.

This $\frac{5}{4}$ lower bound bound has not yet been improved, despite
its
distance from the $\frac{13}{7}$ upper bound. Here, we contribute to 
understanding why improving the lower bound may be difficult:
We show
that the lower bound
is tight in the sense that there is an algorithm with a competitive
ratio of $\frac{5}{4}$ when there are only items of size $S$ and $L=2S-1$
and the maximum bin size is $4S-3$. In addition to allowing all possible
bin sizes in this range, we consider all possible ratios of
the number of items of size $S$ to the number of items of size $L$ and show
that the competitive ratio is $\frac{5}{4}$ in a middle range, but decreases
as the ratio moves away from this middle range.
The bounds obtained are tight. Some of the results concerning packings
are general, not restricted to this special case, and might be useful in
narrowing the gap between the upper and lower bounds for the general
problem.

Grid Scheduling with Conflicts,
the Grid Scheduling problem with the restriction that certain pairs of 
jobs cannot be executed on the
same processor (i.e.~be placed in the same bin), was later studied by 
Epstein et al.\ in~\cite{EFL11}.
In the semi-online version, where the bins come in non-increasing
order, they show a 
lower bound of $\frac{5}{4}$ and an upper bound of $3$ on the competitive ratio.

In Zhang's Bin Packing problem~\cite{Z97},
it is also the case that  items are given in advance and bins arrive online.
Bins and items can 
have any sizes in the range $(0,1]$,
but the smallest bin is at least as large as the largest item,
so that each item can be packed in any bin. Zhang showed that
analogues of four classical bin packing algorithms, including
First-Fit Decreasing, all have a competitive ratio of $2$.
Algorithms for the Grid Scheduling problem also apply to Zhang's Bin Packing problem
with the same competitive ratio.
In particular, the algorithm in~\cite{BF10},
with a competitive ratio of $\frac{13}{7}$,  solves
an open question proposed in~\cite{Z97}, asking if there exists an
algorithm with competitive ratio less than $2$.
However,
lower bounds for the Grid Scheduling Problem do not apply to Zhang's Bin Packing problem,
since the Grid Scheduling problem does not
restrict the sizes of the bins to be at least as large as every item.

In many applications of bin packing, there are only a small number
of different item sizes. A number of papers have
considered the problem of packing a sequence of items
of  two different sizes in bins of size 1 in an online manner.
In particular, there is a lower bound of 4/3 on the competitive ratio
\cite{L80,GJY06} and a matching upper bound \cite{GJY06}.
When both item sizes are bounded above by $1/k$, 
the competitive ratio can be improved to $\frac{(k+1)^2}{k^2 +k+1}$
\cite{EL08}.

{\bf Overview}
We begin with some preliminaries,
including
a formal definition
of the Grid Scheduling problem.
In Section~\ref{sec:lower}, we
prove
the lower
bound on the competitive ratio of the Restricted Grid Scheduling 
problem (and, hence, of the Grid Scheduling problem).
Then we provide a matching upper bound
for the Restricted Grid Scheduling problem, under the assumption that
the maximum possible bin size, $M$, is at most $4S-3$.
As discussed in Section~\ref{prelim}, one has to assume that $M$ is
bounded (as compared to $S$)
in the general Grid Scheduling problem 
for any algorithm to be competitive. Thus, our algorithm
 shows that any better lower bound
for the general Grid Scheduling problem must use more than two item sizes,
 different item sizes, or
a larger maximum bin size.
In Section~\ref{sec:properties}, we
present
properties of optimal packings
for the Restricted Grid Scheduling problem.
Some of these provide motivation for the design of our algorithm,
while others are important for the analysis.
We show why a few simple
algorithms are not optimal in Section \ref{sec:simple}.
Our algorithm, \newalg,
appears
in Section~\ref{sec:upper}, together with the analysis.
In Section~\ref{sec:relax}, we consider a slightly relaxed version
of the Grid Scheduling problem and show that it is equivalent to the
version we defined.
We conclude with some open questions.

\section{Preliminaries}
\label{prelim}

Given a set of items, each with a positive integer size,
and a sequence of bins,
each with a positive integer size at most $M$,
the goal
of the Grid Scheduling problem
is to pack all the items in the bins so that the sum of
the sizes of the items packed in each bin is at most the size of the bin
and the sum of the sizes of bins used is minimized.
The bins in the sequence arrive one at a time and each must be packed
before the next bin arrives, without knowledge of the sizes
of any future bins.
If a bin is at least as large as the smallest unpacked item, it
must be packed with at least one item.
Moreover, after it has been packed, no unpacked item fits into the space that remains in the bin.
There is no cost for a bin that is smaller than 
the smallest unpacked item.
Note, there is no loss of generality
in assuming that every item has size at most $M$,
because no item of size greater than $M$ can be packed.

Because all items are required to be packed,
it is assumed that enough sufficiently large bins arrive.
Thus, any algorithm eventually packs all items.
For example, it suffices that every sequence 
has a suffix consisting of bins of size $M$, whose length is equal to the number of items.

The {\em cost} of a packing is the sum of the sizes of the bins it uses
(i.e. in which it places at least one item).
A packing is {\em optimal}
if it is feasible and
its cost
is not more than that of any other feasible packing
for the same set of items and sequence of bins.

The competitive ratio \cite{BEY98,KMRS88,ST85} of an on-line algorithm is 
the worst-case ratio of
the on-line performance to the optimal off-line performance, up to an
additive constant.
More precisely, for a set $I$ of items (or, equivalently, a multi-set of item sizes), 
a sequence $\sigma$ of bins, and an
algorithm \alg for the Grid Scheduling problem, let $\alg(I,\sigma)$ denote
the cost of the packing produced 
by \alg when packing $I$ in the sequence $\sigma$ of bins. Then,
the {\em competitive ratio} $\cratio{\alg}$ of \alg is 
$$\cratio{\alg} =
\inf
\left\{ c \mid \exists \const ,\forall I, \forall \sigma,
\alg(I,\sigma) \leq c \cdot \opt(I,\sigma) + \const \right\},$$
where $\opt(I, \sigma)$ denotes the minimum cost of any feasible packing of $I$ in the sequence $\sigma$ of bins 
i.e. produced by an optimal off-line algorithm.
For specific choices of
families of increasingly large sets $I_n$ and sequences $\sigma_n$,
with $n \in \nats$,
the performance ratios, 
$\frac{\alg(I_n,\sigma_n)}{\opt(I_n,\sigma_n)}$, can be used to prove a lower bound on the
competitive ratio of $\alg$.

If there is no bound on the maximum bin size,
the Grid Scheduling problem is uninteresting, because the competitive ratio is unbounded:
Once enough bins for an optimal packing have arrived,
an adversary could give bins of arbitrarily large size, which
the algorithm would be forced to use.
This can be seen in the
following example: Suppose there are two items of size $S$ and one of size $L = 2S-1$.
The first bin has
size $2S$. If the algorithm puts at least one of the small items there, 
the adversary
next gives two bins of size $S$, followed by one of size
$M$.
If the
algorithm puts the larger item in the first bin, the adversary next 
gives one bin of
size $L$, followed by one of size $M$.
The ratio of the algorithm's performance compared to the adversary's
is at least $\min\{(2S+M)/4S, 
(2S+L+M)/(2S+L)\}$.
This is close to
$\frac{3}{2}$ when $M=4S-3$,
but  it is unbounded if $M$ can be arbitrarily large compared to $S$.
Note that this example also shows
a lower bound of $\frac{3}{2}$ on the {\em strict}
competitive ratio (the competitive ratio where the additive constant
in the definition is zero), even when $M=4S-3$,
but not on the competitive ratio.

The Restricted Grid Scheduling problem is the restriction of the Grid Scheduling
problem 
in which all items have size $S$ or $L=2S-1$, where $M = 4S-3$ and $S > 1$.
Note that the maximum size $M=4S-3$ was chosen to be as large as possible
without allowing space for two items of size $L$.
There is no loss of generality
in assuming that every bin has size at least $S$,
because no bin of size less than $S$ can be used to pack an item.
The value $S$ is assumed to be a constant, to allow an algorithm
to use one extra bin of size $M$, for example, which
is only counted in the additive constant of the competitive ratio.
The definition of competitive ratio allows the additive constant to
be arbitrarily large, as long as it is
independent of the number of items.

\section{Lower Bounds}
\label{sec:lower}

In this section, we assume that there can be an unbounded number of items of size $S$ and
an unbounded number of  items of size $L$. Otherwise, an algorithm that
gives lower preference to the items of 
which there are only a bounded number, 
has 
competitive ratio $1$,
taking the additive constant to be $\const = \min\{s,\ell\} \cdot M$.

\begin{theorem}
No algorithm for 
the Restricted Grid Scheduling problem
has competitive ratio lower than
$$\left \{ \begin{array}{ll}
1 + \frac{1}{2 + s/\ell}
& \mbox{ if }  \ell \leq s/2,\\

1 + \frac{1}{4}
& \mbox{ if }  s/2 < \ell \leq 5s/6, \mbox{ and}\\

1 + \frac{2}{3 + 6\ell/s}
& \mbox{ if }  5s/6< \ell.
\end{array} \right .$$
\label{thm:lower}
\end{theorem}

\begin{proof}
Consider an algorithm for the Restricted Grid Scheduling problem and
an instance in which 
there are $s$ items of size $S>1$,
$\ell$ items of size $L=2S-1$, and maximum bin size 
$M=4S-3$.
We start with the case when $\ell \leq s/2$ and then handle the 
case when $\ell > s/2$. In both cases, we consider two
subcases, depending on how the algorithm packs the first
batch of bins.

\medskip

\noindent
{\bf Case I}: $\ell \leq s/2$.

The adversary begins by giving $\ell$
bins of size $2S$.
In each of these bins, the algorithm must pack either
two items of size $S$ or one item of size $L$.
Let $0 \leq k \leq \ell$ be the number of these bins
in which the algorithm packs two items of size $S$.
Then the algorithm has $s-2k$ items of size $S$
and $k$ items of size $L$ left to pack.

\medskip

{\bf Case I.1}: $k \leq \ell/2$.\\
Next,
the adversary gives
$s-2\ell$ bins of size $S$, followed by $2\ell$ bins of size $L$.
The algorithm must pack one item of size $S$ in each bin of size $S$
and must use one bin of size $L$ for each of the remaining
$s-2k -(s-2\ell) = 2(\ell-k)$ items of size $S$
and $k$ items of size $L$.
The total cost incurred by the algorithm is
$\ell \cdot 2S + (s-2\ell)\cdot S + (2\ell - k ) \cdot L
= s \cdot S + 2 \ell \cdot L - k \cdot L
\geq s \cdot S + 3\ell \cdot L/2$.

For this sequence, OPT packs two items of size $S$
in each of the $\ell$ bins of size $2S$, one item of size $S$ in each of
the $s-2\ell$ bins of size $S$, and 
one item of size $L$ in each of the next $\ell$ bins of size $L$, for
total cost $s\cdot S + \ell \cdot L$.
Thus, the \perfratio{}  of the algorithm
is at least
$$\frac{s \cdot S + 3\ell \cdot L/2}{s\cdot S + \ell \cdot L}
 =  1 + \frac{\ell \cdot L}{2\ell \cdot L + 2s \cdot S }
 =  1 + \frac{1}{2+\frac{s}{\ell} + \frac{s/\ell}{2 S-1}}
\rightarrow 1 + \frac{1}{2+ s/\ell} \mbox{ as } S \rightarrow \infty .$$

{\bf Case I.2}: $k > \ell/2$.\\
Next, the adversary  gives $s$ bins of size $S$, followed by
$\ell$ bins of size $M$.
The algorithm packs one item of size $S$ in the
first $s-2k \geq 2\ell - 2k \geq 0$
of these bins, using up all its items of size $S$.
It discards the remaining $2k$ bins of size $S$,
because it has no remaining elements that are small enough
to fit in them. Then  the algorithm packs its remaining $k$ items
of size $L$ into $k$ bins of size $M = 4S -3$.
The total cost incurred by the algorithm is
$$\ell \cdot 2S + (s-2k) \cdot S + k \cdot (4S-3)
= (2\ell + s ) \cdot S + k \cdot (2S - 3)
> (2\ell + s ) \cdot S + \ell \cdot (S - 3/2).$$

For this sequence,  OPT packs one item of size $L$
in each of the $\ell$ bins of size $2S$ and
one item of size $S$ in each of the next $s$ bins.
The total cost of OPT's packing is $\ell \cdot 2S + s \cdot S $.
Thus, the
\perfratio{} of the algorithm is
greater than
$$\frac{(3\ell + s) \cdot S - 3\ell/2}{(2\ell + s) \cdot S }
= 1 + \frac{1 - 3/(2S)}{2 + s/\ell }
\rightarrow 1 + \frac{1}{2 + s/\ell} \mbox{ as } S \rightarrow \infty .
$$

\noindent
{\bf Case II}: $\ell > s/2$.

The adversary begins by giving $\lfloor s/2 \rfloor$
bins of size $2S$.
In each of these bins, the algorithm must pack either
two items of size $S$ or one item of size $L$.
Let $0 \leq k \leq \lfloor s/2 \rfloor$ be the number of these bins
in which the algorithm packs two items of size $S$.
Then the algorithm has $s-2k$ items of size $S$
and $\ell - \lfloor s/2 \rfloor + k$ items of size $L$ left to pack.

{\bf Case II.1}: $k \leq \lfloor s/ 2 \rfloor - s/8 - \ell/4 + 1$
or $k \leq \lfloor s/ 2 \rfloor - s/3 + 1$.\\
Next, the adversary gives
$\lceil s/2 \rceil - \lfloor s/2 \rfloor$ bins of size $S$
(i.e. one bin of size $S$
if $s$ is odd and no bins of size $S$ if $s$ is even),
$\lfloor s/2 \rfloor -k + \ell -1$ bins of size $L$,
and one bin of size $M$.
Since $(s-2k) + 
(\ell - \lfloor s/2 \rfloor + k) =  (\lceil s/2\rceil -\lfloor s/2 \rfloor)  + (\lfloor s/2 \rfloor -k + \ell -1) +1$,
the algorithm packs one of its remaining items in each 
of these bins,
so the total cost it incurs 
is
$ \lfloor s/2 \rfloor \cdot 2S + (\lceil s/2 \rceil -
\lfloor s/2 \rfloor) \cdot S +
(\lfloor s/2 \rfloor - k+ \ell -1) \cdot L + M
 =  s \cdot S + \ell \cdot L + 
(\lfloor s/2 \rfloor -k+1) \cdot L -1$.

We may assume, without loss of generality, that $s \geq 4$. Then
$\lfloor s/2 \rfloor -k \geq
\min\{ s/8 + \ell/4, s/3\} -1 > s/4 - 1 \geq 0$,
so there are at least $\ell$ bins of size $L$.
For this sequence, OPT packs two items of size $S$
in each of the $\lfloor s/2 \rfloor$ bins of size $2S$, one item of size $S$
in the bin of size $S$, if $s$ is odd,
and one item of size $L$ in each of the next $\ell$ bins of size $L$.
Its total cost is $s\cdot S + \ell \cdot L$.

If $k \leq \lfloor s/ 2 \rfloor - s/8 - \ell/4 + 1$,
then $(\lfloor s/ 2 \rfloor -k+1) \cdot L \geq (s/8 + \ell/4) \cdot L
=  S \cdot s/4 - s/8 + L \cdot \ell/4$,
so the
\perfratio{} of the algorithm
is at least
$$\frac{(s \cdot S + \ell \cdot L ) \cdot 5/4 -s/8 -1}
{s\cdot S + \ell \cdot L}
= \frac{5}{4} - \frac{s/8 + 1}{s\cdot S + \ell \cdot (2S-1)}
\rightarrow \frac{5}{4}
\mbox{   as } S \rightarrow \infty.$$
Similarly, if $k \leq \lfloor s/ 2 \rfloor - s/3 + 1$,
then $(\lfloor s/ 2 \rfloor -k+1) \cdot L \geq (s/3) \cdot L$,
so the
\perfratio{}  of the algorithm
is at least
$$1 + \frac{(s/3) \cdot L}{s\cdot S + \ell \cdot L} 
= 1 + \frac{2 - 1/S}{3+6\ell/s -3\ell/(s \cdot S)}
\rightarrow 1 +\frac{2}{3+6\ell/s}
\mbox{   as } S \rightarrow \infty.$$
{\bf Case II.2}: $k > \lfloor s/ 2 \rfloor - s/8 - \ell/4 + 1$
and
$k > \lfloor s/ 2 \rfloor - s/3 + 1 > s/6$.\\
Next,
the adversary gives
$\max \{ s - 2k, s -\ell + \lfloor s/2\rfloor\}$ bins of size $S$,
followed by $\min\{ 2k, \ell-\lfloor s/2\rfloor\}$ bins of size $L+S$,
$\max\{\ell - \lfloor s/2\rfloor- 2k ,0\}$  bins of size $L$, and finally
$k$ bins of size $M=4S-3$.
The algorithm packs its $s-2k$ items of size $S$ into
bins of size $S$.
Since $\min\{ 2k, \ell-\lfloor s/2\rfloor\} +
\max\{\ell - \lfloor s/2\rfloor- 2k ,0\} +k = 
\ell -  \lfloor s/2\rfloor + k$, the algorithm
packs one item of size $L$ in each bin
of size $L+S$, $L$, and $M$.
The total cost incurred by the algorithm  is
$$\begin{array}{l}
\lfloor s/2\rfloor \cdot 2S + (s-2k) \cdot S +
\min\{ 2k, \ell-\lfloor s/2\rfloor\} \cdot (L+S)  \\
~~~~~~ + \max\{\ell - \lfloor s/2\rfloor- 2k ,0\} \cdot L 
+ k \cdot M\\
= (2 \lfloor s/2\rfloor + s) \cdot S 
+ (\ell - \lfloor s/2\rfloor) \cdot L + k \cdot (2S-3) +
\min\{2k,\ell - \lfloor s/2\rfloor\} \cdot S.
\end{array}$$
For this sequence, \opt\ fills every bin it uses except for
the $\lfloor s/2\rfloor$ bins of size $2S$, in which it puts items
of size $L = 2S-1$. Therefore, the total cost of \opt's packing
is $s\cdot S + \ell \cdot L +  \lfloor s/2\rfloor$.

If $2k \geq \ell - \lfloor s/2\rfloor$, the
\perfratio{} of the algorithm
is at least
\begin{eqnarray*}
&& \frac{(2\lfloor s/2\rfloor+s) \cdot S
+ (\ell-\lfloor s/2\rfloor )\cdot L
+ k \cdot (2S-3) +
(\ell - \lfloor s/2\rfloor) \cdot S}
{s \cdot S +\ell \cdot L +\lfloor s/2\rfloor} \\
&& > \frac{(\ell+s + \lfloor s/2\rfloor)\cdot S + (\ell - \lfloor s/2\rfloor)
\cdot L + (\lfloor s/2\rfloor - s/8 - \ell/4 + 1) \cdot (2S-3)}
{s \cdot S +\ell \cdot L +\lfloor s/2\rfloor} \\
&& = \frac{5}{4}
+ \frac{(\lfloor s/2\rfloor - s/2 +2) \cdot S +
\ell  +3s/8 - 13\lfloor s/2 \rfloor/4 - 3}
{(s +2\ell) \cdot S - \ell +\lfloor s/2\rfloor} \\
&& >\frac{5}{4} + \frac{\ell  +3s/8 - 13\lfloor s/2 \rfloor/4 - 3}
{(s +2\ell) \cdot S - \ell +\lfloor s/2\rfloor} 
\rightarrow \frac{5}{4}
\mbox{   as } S \rightarrow \infty.
\end{eqnarray*}

If $2k \leq \ell - \lfloor s/2\rfloor$, the
\perfratio{} of the algorithm
is at least
\begin{eqnarray*}
& &\frac{(2\lfloor s/2\rfloor+s) \cdot S + (\ell-\lfloor s/2\rfloor )\cdot L
+ k \cdot (4S-3)}{s \cdot S +\ell \cdot L +\lfloor s/2\rfloor} \\
& &= 1 + \frac{k \cdot (4S-3)}{(s + 2\ell) \cdot S - \ell  +\lfloor s/2\rfloor} \\
& &> 1 + \frac{(s/6) \cdot (4S-3)}{(s+ 2\ell) \cdot S +
\lfloor s/2\rfloor - \ell} \\
& &= 1 + \frac{2- 3/2S}{3 + 6\ell/s + 3(\lfloor s/2\rfloor - \ell)/sS}
\rightarrow 1 + \frac{2}{3+6\ell/s} \mbox{ as } S \rightarrow \infty.
\end{eqnarray*}

Note that $\frac{5}{4} \leq 1 + \frac{2}{3+6\ell/s}$ if and only if
$\ell \leq 5s/6$. Thus, $\frac{5}{4}$ is a lower bound on the
competitive ratio 
when $s/2 < \ell \leq 5s/6$
and $1 + \frac{2}{3+6\ell/s}$ is a lower bound on the
competitive ratio when $\ell > 5s/6$.
\end{proof}

The above lower bound uses limits taken as the size, $S$, approaches
infinity, despite the value $S$ being considered a constant. The point
is that, for any constant $\epsilon$, there exists a value for $S$ making 
the lower bound greater than the actual ratio minus $\epsilon$.
A similar proof gives the same ratio if one normalizes the sizes
(to
$M=1$, $S=1/4+\epsilon$ and $L=1/2+\delta$, where
$0<\delta<2\epsilon$) and lets  $\epsilon$ approach zero.
It is simply necessary that $L$ be asymptotically close to, but less than $2S$, and
$M$ be asymptotically close to, but less than $2L$.

\section{Properties of Optimal Packings for Restricted Grid Scheduling}
\label{sec:properties}

From now on, we assume that there are no bins of size less than $S$.
Intuitively, it seems
bad
to place an item of size $S$ in
a bin if the algorithm could have placed an item of size $L$ there
instead.
We say that a bin used in a packing is {\em bad}
if it contains
at least one
item of size $S$,
it has empty space at least $L-S$, and
some later bin contains an item of size $L$.
Note that a bin containing an item of size $L$
and an item of size $S$ has empty space at most
$M-L-S = L-S-1$, so it is not bad.

\begin{lemma}
For any finite set of items and any sequence of bins,
there exists an optimal 
packing that contains no bad bin.
\label{emptyspace}
\end{lemma}

\begin{proof}
Let $p$ be any optimal 
packing of a finite set of items $I$ into a sequence of bins 
$\sigma = \langle b_1, b_2, \ldots, b_m\rangle$.
Assume the claim is true for any smaller set of items.

Suppose $p$ contains a bad bin. Let $b_f$ be the first bad bin in $p$ and let $b_\ell$ be the last bin in $p$ that contains an item of size $L$. Then, by definition of bad, $f < \ell$
and, by definition of 
the Grid Scheduling problem, the empty space in $b_f$ is less than $L$.

First suppose that $p$ has an empty bin between $b_f$ and $b_\ell$.
Let $b_k$ be the first such bin.
Since $p$ is feasible, $size(b_k) < L$ and only items of size $L$ are packed in bins $b_k,\ldots,b_m$.
At most one item of size $L$ can be packed in any bin (of size at most $M$), so
each nonempty bin after $b_k$,
including $b_\ell$, contains exactly one item, which is of size $L$.
Consider the packing $p'$ obtained from $p$
by moving one item of size $S$ from bin $b_f$ to
bin $b_k$ and moving the item of size $L$ in bin $b_\ell$ to bin $b_f$.
Then $b_\ell$ is empty in the packing $p'$.
Since $p$ is feasible and 
each bin in $p'$
(except for the unused bin, $b_{\ell}$,)
is as full as the
corresponding bin in $p$,
it follows that
$p'$ is also feasible.
But the cost of $p'$ is equal to  $cost(p) - size(b_\ell)
+ size(b_k) < cost(p)$, since $size(b_k) < L \leq size(b_\ell)$.
This contradicts the optimality of $p$.

Therefore $p$ has no empty bins between $b_f$ and $b_\ell$.
Then a  packing $p'$ can be obtained from $p$ by switching an item of size
$S$ in $b_f$ with the item of size $L$ in $b_\ell$.
Note that $p'$ is optimal, since $p$ is. 
Since $b_f$ contains an item of size $L$, it is not bad in $p'$.
Since $p$ is 
feasible,
no item packed in bin $b_f$ or later fits in any bin prior to $b_f$.
The same is true for $p'$, since $p$ and $p'$ are the same prior to bin $b_f$.
Furthermore,
in $p'$, no item packed in a bin after $b_f$ will fit in bin $b_f$,
since its empty space, which was less than $L$ in $p$, is
less than $L + S - L  = S$ in $p'$.

Let $J \subsetneq I$ be the set of items that $p'$ packs into $\sigma'  = \langle b_{f+1},  \ldots, b_m\rangle$.
By the induction hypothesis, there is an optimal 
packing $q$ of $J$ into $\sigma'$ that contains
no bad bins.
Let $p''$ be the packing of $I$ into $\sigma$ that packs each item in $I-J$
into the same bin that $p'$ does and packs each item in $J$ into the same bin that $q$ does.
Then $p''$ is an optimal 
packing without bad bins.
By induction, the claim is true for all finite sets of items, $I$.
\end{proof}

Lemma~\ref{emptyspace} motivates the following definition.

\begin{definition}
A  packing is {\em reasonable} if,
except for
those bins that arrive when
there are no items of size $L$ remaining or at most two items of size $S$ remaining,
every bin $b$ it uses contains
\begin{itemize}
\item
one item of size $S$, if $size(b) \in [S,L-1]$,
\item
one item of size $L$, if $size(b) = L$,
\item
two items of size $S$ or one item of size $L$, if $size(b) \in [L+1,L+S-1]$,
\item
one item of size $S$ and one item of size $L$, if $size(b) = L+S$, and
\item
three items of size $S$ or one item of size $S$ and one item of size $L$,
if $size(b) \in [L+S+1,2L-1]$.
\end{itemize}
\end{definition}
Note that a 
feasible packing with no bad bins is reasonable.

\begin{corollary}
\label{reasonable}
For any set of items and any sequence of bins, there exists
an optimal packing that is reasonable.
\end{corollary}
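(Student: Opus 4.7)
The plan is to invoke Lemma \ref{emptyspace} directly. It already produces an optimal thrifty packing $p$ of the given items into the given sequence of bins such that no bin of $p$ is bad. Since $p$ is thrifty by construction, what remains is to verify that the front of $p$ is a reasonable partial packing, i.e., that every bin $b$ in the front holds exactly the items prescribed for its size class.

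The key observation I would establish first concerns which items are still unpacked when an arbitrary front bin $b$ arrives. By the definition of key bin, the key bin is the \emph{first} bin after which either all $L$-items are gone or at most two $S$-items remain; in particular that failure has not yet occurred when the key bin arrives, so for every bin $b$ in the front (the key bin included) at least one $L$-item and at least three $S$-items remain unpacked at that moment. This remaining supply is what makes thriftiness and the absence of bad bins powerful enough to pin down bin contents.

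From there I would split on $size(b)$ using $L=2S-1$ and $M=4S-3$. In each of the five ranges I would enumerate the sub-multisets of $S$- and $L$-items that physically fit in $b$, then discard any option that leaves $b$ wasteful (a still-unpacked item would fit) or bad (contains an $S$-item, has empty space at least $L-S$, and $L$-items remain). What survives in each range should match the definition exactly. The only genuinely delicate case is $size(b)=L+S$: a lone $L$-item is wasteful (it leaves $S$ empty slots), two $S$-items leave empty space equal to $L-S$ and so form a bad bin (the defining inequality $\geq L-S$ is tight here), and three $S$-items do not fit because $3S=L+S+1$; this forces the single allowed combination of one $L$ plus one $S$. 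In the other four ranges the relevant inequalities are strict, so the bookkeeping is routine.

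I expect this one tight case at $L+S$ to be the main obstacle, along with two minor bookkeeping notes: the interval $[L+S+1,2L-1]$ is empty when $S=2$, so it contributes nothing for the smallest $S$; and trivial corner instances (very few items of one size, making the front conditions vacuous) can be absorbed into the additive constant allowed by the definition of competitive ratio, so restricting attention to instances where the case analysis actually applies is harmless.
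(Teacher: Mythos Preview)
Your approach is correct and is exactly the case analysis the paper leaves implicit: the paper states Corollary~\ref{reasonable} without proof, treating it as an immediate consequence of Lemma~\ref{emptyspace} and the definition of ``reasonable,'' and your size-by-size verification is the natural way to unpack that implication. One minor remark: your appeal to the additive constant in the competitive ratio is out of place here, since the corollary is a statement about packings, not about ratios; in the corner instances you worry about (at most two $S$-items or no $L$-items initially), the key-bin condition already holds before any bin is packed, so the front is empty and the reasonableness condition is vacuous---no special handling is needed.
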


From now on, we will restrict attention to
reasonable packings.

Given a set of items and a sequence of bins,
two reasonable
packings of these bins (where different subsets of the items
might be packed) can differ 
as to whether they
use one item of size $L$ or two items of size $S$ in certain bins.
Therefore,  the numbers of \Ss and
\Ls they do not assign may differ. However, if both
have at least one \Sitem and at least one \Litem{} available,
the set of bins they have used is the same and
there is a simple invariant relating the numbers
of available \Ss and available \Ls they have.

\begin{lemma}
Given a set of items, consider two
reasonable packings of the same sequence of bins, which might pack different
subsets of the items.
Suppose that before bin $b$, each
has at least one item of size $S$ and 
at least one item of size $L$ available.
Then immediately after bin $b$ has been packed, the sum of the number of items of size $S$
available plus twice the number of items of size $L$ available is the same for both.
\label{lemma-reasonable}
\end{lemma}

\begin{proof}
Consider any bin $b$ in the sequence $\sigma$ 
and suppose that immediately before bin $b$ is packed,
each 
packing has items of both size $S$ and $L$ available
and the sum of the number of items of size $S$
available plus twice the number of items of size $L$ available is the same for both.
Note that this is true initially, since all items are still available
for both.
Since both
are reasonable, either bin $b$ is filled the same way in both
or in one of these 
packings, bin $b$ contains two more items of size $S$ and one less item of size $L$
than the other. Thus the claim remains true immediately after bin $b$ is packed.
\end{proof}

For any sequence of bins $\sigma$ and any nonnegative integers $s$ and
$\ell$,
let $OPT(\sigma,s,\ell)$ denote the cost of an optimal packing
of $s$ items of size $S$
and $\ell$ items of size $L=2S-1$ using $\sigma$. 
This must be at least the sum of the sizes of all the items.

\begin{proposition}
For all sequences of bins $\sigma$ and all integers $s,\ell \geq 0$,
$OPT(\sigma,s,\ell) \geq sS+\ell L$.
\label{optspacelower}
\end{proposition}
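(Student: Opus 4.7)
The proposition is essentially a volume/capacity argument, so the plan is very short. The idea is that the cost of any packing is the sum of the sizes of the bins it uses, and each used bin has size at least the total size of the items it contains (this is exactly the packing validity constraint that the sum of item sizes assigned to a bin does not exceed its size).

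First, I would invoke the definition of a packing: every item in $I$ is assigned to some bin. Therefore, summing over all bins used, the sum of sizes of items packed is exactly the total size of all items in $I$, which equals $sS + \ell L$ in our setting. Second, I would sum the bin-capacity constraint over all used bins: for each used bin $b$, the total size of items assigned to $b$ is at most $\mathrm{size}(b)$. Summing, the total size of items is at most the sum of sizes of used bins, which is the cost of the packing. Chaining these two observations gives cost $\geq sS + \ell L$ for any packing, and in particular for an optimal one; hence $OPT(\sigma,s,\ell) \geq sS + \ell L$.

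There is really no obstacle here — the statement does not depend on the bin sequence $\sigma$ at all (beyond the standing assumption that $\sigma$ admits some valid packing, guaranteed by the convention that enough sufficiently large bins arrive). The only thing to be slightly careful about is making clear that we are bounding $OPT$ below, so we argue for an arbitrary valid packing and then specialize to the optimal one; this is immediate since cost is being bounded from below by a quantity that depends only on $I$, not on the packing.
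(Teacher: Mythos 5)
Your argument is correct and is exactly the volume bound the paper has in mind; the paper simply states (immediately before the proposition) that the cost ``must be at least the sum of the sizes of all the items'' without spelling it out. Your write-up just makes explicit the two obvious steps (every item is packed, and each bin's size dominates the total size of its contents), so there is nothing to add.
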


Given any optimal packing for a set of items,
a packing for a subset of these items
can be obtained by removing the additional items from bins, starting
from the end.

\begin{proposition}
For all sequences of bins $\sigma$ and all integers $0 \leq s' \leq s$
and $0 \leq \ell' \leq \ell$,
$OPT(\sigma,s',\ell') \leq OPT(\sigma,s,\ell)$.
\label{optfewer}
\end{proposition}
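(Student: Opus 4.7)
The plan is to derive this from Lemma \ref{validsubset} by iteratively removing items. Start with an optimal valid packing $p$ of the set $I$ consisting of $s$ items of size $S$ and $\ell$ items of size $L$ into $\sigma$. By definition of $OPT$, this packing exists and has cost exactly $OPT(\sigma,s,\ell)$. Our goal is to exhibit some valid packing of a set $I'$ consisting of $s'$ items of size $S$ and $\ell'$ items of size $L$ into $\sigma$ whose cost is at most $OPT(\sigma,s,\ell)$, which will immediately yield $OPT(\sigma,s',\ell') \leq OPT(\sigma,s,\ell)$.

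To do this, I would argue by induction on $(s - s') + (\ell - \ell')$. If this quantity is $0$, there is nothing to prove. Otherwise, since $s' \leq s$ and $\ell' \leq \ell$, at least one of the inequalities is strict, so we may pick an item $i$ from $I$ of that size (of size $S$ if $s' < s$, otherwise of size $L$). Applying Lemma \ref{validsubset} to $p$ and $i$ gives a valid packing of $I \setminus \{i\}$ into $\sigma$ that uses a subset of the bins used by $p$, and hence has cost at most $OPT(\sigma,s,\ell)$. Now $I \setminus \{i\}$ contains either $s-1$ items of size $S$ and $\ell$ items of size $L$, or $s$ items of size $S$ and $\ell-1$ items of size $L$, and in either case the difference parameter $(s-s')+(\ell-\ell')$ has decreased by $1$. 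Applying the inductive hypothesis completes the argument.

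There isn't really a hard step here; the work has already been done in Lemma \ref{validsubset}, which shows that removing a single item never forces the use of additional bins. The only thing to be careful about is that the new packing obtained at each stage is genuinely a packing of the right multiset of item sizes (which is clear since items are distinguished only by size) and that validity is maintained as the number of unpacked items changes (which is exactly the content of Lemma \ref{validsubset}). So the proof is essentially a one-line induction, and could alternatively be written non-inductively by simply removing the $s - s'$ items of size $S$ and $\ell - \ell'$ items of size $L$ one at a time and invoking Lemma \ref{validsubset} $ (s-s')+(\ell-\ell')$ times.
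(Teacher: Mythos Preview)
Your proof is correct and aligns with the paper's approach. The paper itself does not give a formal proof of this proposition; it only offers the one-sentence justification preceding the statement, namely that a packing for a subset of the items ``can be obtained by removing the additional items from bins, starting from the end.'' Your argument makes this precise by invoking Lemma~\ref{validsubset} to guarantee that each single-item removal yields a valid packing using no more bins, and then iterating; this is exactly the content the paper is gesturing at, just spelled out carefully.
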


For any sequence of bins $\sigma$
and any nonnegative integers $s$ and $\ell$,
let $R(\sigma,s,\ell)$ denote the maximum cost of any
reasonable packing
of $s$ items of size $S$
and $\ell$ items of size $L = 2S-1$
into $\sigma$.

When all items have the same size, all 
algorithms,
including \opt, behave exactly the same.

\begin{proposition}
For all sequences of bins $\sigma$ and all integers $s',\ell' \geq 0$,
$R(\sigma,s',0) = OPT(\sigma,s',0)$
and 
$R(\sigma,0,\ell')$ $= OPT(\sigma,0,\ell')$.
\label{oneitem}
\end{proposition}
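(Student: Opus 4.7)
The plan is to reduce the statement to a uniqueness fact about thrifty packings. When all items have the same size, every thrifty packing of those items into $\sigma$ uses exactly the same set of bins, and so has the same cost. Combined with Corollary \ref{thrift-cor} (which supplies an optimal thrifty packing) and Corollary \ref{reasonable} (which supplies a reasonable, and hence thrifty, packing), this forces the common thrifty cost to equal both $OPT(\sigma,s',0)$ and $R(\sigma,s',0)$. Since every reasonable packing is thrifty by definition, the inequality $R \geq OPT$ coming from the existence of an optimal reasonable packing, together with the equality of all thrifty costs, pins down $R = OPT$.

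I would treat the $\ell'=0$ case; the $s'=0$ case is symmetric with $L$ in place of $S$. The argument proceeds by induction along $\sigma$. At bin $b$, with $r$ items of size $S$ remaining when $b$ arrives, I would show that every thrifty packing places exactly $\min(\lfloor size(b)/S\rfloor,\, r)$ items in $b$. If $size(b) < S$ or $r = 0$, the bin must be empty. Otherwise the problem rules force at least one item into $b$, and thriftiness forbids leaving empty space of $S$ or more while another $S$-item is either still unpacked or already scheduled in a later bin. Hence, if $r > \lfloor size(b)/S\rfloor$, the bin must be filled to its capacity of $\lfloor size(b)/S\rfloor$ items; and if $r \leq \lfloor size(b)/S\rfloor$, all $r$ remaining items must be placed in $b$ (in which case no $S$-item appears after $b$, so any residual empty space cannot witness wastefulness). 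Either way, the count in $b$, and therefore the remaining inventory after $b$, is identical in every thrifty packing.

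By induction the set of used bins, and hence the cost, is identical across all thrifty packings, which yields $R(\sigma,s',0) = OPT(\sigma,s',0)$. There is no serious obstacle; the one delicate point is the boundary case in which all remaining items fit inside a single bin, and this is handled above by observing that, once no $S$-items remain or appear later, a non-full bin cannot be wasteful. The argument for $s'=0$ is identical after replacing $S$ with $L$ everywhere.
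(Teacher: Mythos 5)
Your argument is correct and is exactly the justification the paper has in mind: the paper states this proposition without a formal proof, relying only on the preceding remark that ``when all items have the same size, all thrifty algorithms, including \opt, behave exactly the same,'' which is precisely the uniqueness-of-thrifty-packings fact you establish by induction along $\sigma$. Your handling of the boundary case (a non-full final bin is not wasteful once no same-size items remain or occur later) correctly fills in the one detail the paper leaves implicit.
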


Suppose that $R$ and \opt both run out of \Ls at the same time or
they both run out of \Ss at the same time.
If they have items of the other size remaining, then,
by Lemma \ref{lemma-reasonable}, they have the same number remaining.
By feasibility, they have used the same set of bins and, by Proposition \ref{oneitem},
they will use the same set of bins for the remaining items.
Thus, they have the same cost.

The following four lemmas describe the relationship between the costs
incurred by $R$ and \opt when one of them has run out of
one size of items. We begin with the case where \opt is the first to
run out of something and it runs of \Ls.

\begin{lemma}
\label{OptNoLarge}
For all sequences of bins $\sigma$ and all integers $s',\ell' \geq 0$,\\
$R(\sigma,s',\ell') \leq \OPT(\sigma,s'+2\ell',0) + \ell'(2S-3).$
\end{lemma}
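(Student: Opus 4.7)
My plan is to proceed by induction on $\ell'$. The base case $\ell'=0$ is immediate from Proposition~\ref{oneitem}, which gives $R(\sigma,s',0)=\OPT(\sigma,s',0)$, matching the right-hand side exactly (no slack).

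For the inductive step with $\ell'\ge 1$, I will fix a reasonable packing $p$ of $(s',\ell')$ with $\mathrm{cost}(p)=R(\sigma,s',\ell')$ and exhibit a reasonable packing $p''$ of $(s'+2,\ell'-1)$ in $\sigma$ satisfying $\mathrm{cost}(p'')\ge\mathrm{cost}(p)-(2S-3)$. Granted this, the inductive hypothesis applied to $(s'+2,\ell'-1)$ together with $\mathrm{cost}(p'')\le R(\sigma,s'+2,\ell'-1)$ yields the chain
\[
R(\sigma,s',\ell')=\mathrm{cost}(p)\le\mathrm{cost}(p'')+(2S-3)\le R(\sigma,s'+2,\ell'-1)+(2S-3)\le\OPT(\sigma,s'+2\ell',0)+\ell'(2S-3),
\]
which is the desired bound.

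To construct $p''$, the plan is to pick one L-item of $p$ and ``split'' it into two S-items. By the definition of a reasonable packing, each L-item of $p$ occupies a bin $b$ with either $\mathrm{size}(b)\in[L,L+S-1]$ (holding only that L) or $\mathrm{size}(b)\in[L+S,2L-1]$ (holding the L together with one S). Call such a bin \emph{slack} if $\mathrm{size}(b)\in[L+1,L+S-1]\cup[L+S+1,2L-1]$, so that two S-items fit in place of the L, and \emph{tight} if $\mathrm{size}(b)\in\{L,L+S\}$. If $p$ contains any L-item in a slack bin $b$, I take $p''$ to be identical to $p$ except that the L in $b$ is replaced by two S-items; $b$'s new configuration (either two S's or three S's) is still one of the reasonable ones, and $\mathrm{cost}(p'')=\mathrm{cost}(p)$.

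The hard case is when every L-item of $p$ occupies a tight bin. I plan to choose the last such bin $b^*$, replace its L-item by a single S-item (so $b^*$ in $p''$ holds $1$ S if $\mathrm{size}(b^*)=L$ or $2$ S's if $\mathrm{size}(b^*)=L+S$), and place the remaining extra S-item in a later bin of $\sigma$. If $\sigma$ has an unused bin of size in $[S,2S-1]$ arriving after $b^*$, placing the extra S there keeps $p''$ thrifty and reasonable with $\mathrm{cost}(p'')\ge\mathrm{cost}(p)$. Otherwise, I will absorb the extra S into a larger later bin by a small local swap with an S-item that $p$ already packs there; a bin-by-bin accounting, using the invariant of Lemma~\ref{lemma-reasonable} to track the S- and L-counts in $p$ and $p''$ before their L-supplies diverge, bounds the resulting cost change below by $-(2S-3)$. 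The constant $2S-3=M-2S$ arises as the worst-case gap between the largest bin ($\mathrm{size}=M=4S-3$) that a reasonable packing may have used for a single L and the minimum capacity ($2S$) needed to hold the split pair of S-items. The main obstacle is the delicate case analysis in this hard case: verifying that $p''$ remains thrifty with a reasonable front (no wasteful post-front bin, and no front-bin configuration that violates the reasonable definition), and quantifying the worst-case loss so that it does not exceed $2S-3$.
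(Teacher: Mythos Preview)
Your overall strategy --- reduce $R(\sigma,s',\ell')$ to $R(\sigma,s'+2,\ell'-1)+(2S-3)$ by replacing one $L$ by two $S$'s in a maximizing reasonable packing $p$ --- is plausible, but the proposal as written has a real gap in the ``hard case,'' and even the easy case needs more care than you indicate.

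First, a structural issue: your claim that ``each $L$-item of $p$ occupies a bin $b$ with $\mathrm{size}(b)\in[L,L+S-1]$ or $[L+S,2L-1]$ in the prescribed reasonable configuration'' is guaranteed only for $L$'s packed in the \emph{front} of $p$. If the key bin of $p$ was triggered by ``at most $2$ items of size $S$ remain,'' then $L$'s packed after the key bin are merely in thrifty bins, and your slack/tight dichotomy does not directly apply there. Second, in the tight case you replace the $L$ in $b^*$ (of size $L$ or $L+S$) by a single $S$; but a bin of size exactly $L$ holding one $S$ is \emph{not} a legal reasonable configuration, so you must argue that $b^*$ falls outside the front of $p''$, i.e.\ that the key bin of $p''$ occurs strictly earlier. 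That in turn requires tracking how the key bin shifts when the item multiset changes from $(s',\ell')$ to $(s'+2,\ell'-1)$. Third, the insertion of the leftover $S$ ``in a later bin'' must preserve thriftiness of every intermediate bin (each now sees one more unpacked $S$ than it did in $p$), and your appeal to ``a bin-by-bin accounting'' with Lemma~\ref{lemma-reasonable} is not a proof. None of these obstacles looks fatal, but together they are exactly the substance of the argument, and you have not carried them out.

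The paper avoids all of this by a different and much shorter induction: it inducts on $(s',\ell')$ but peels off the \emph{first} bin of $\sigma$ rather than modifying a global packing. Writing $B$ for the size of the first bin and $\sigma'$ for the tail, one has $\OPT(\sigma,s'+2\ell',0)=B+\OPT(\sigma',s'+2\ell'-\lfloor B/S\rfloor,0)$, and any reasonable algorithm either packs the first bin with $\lfloor B/S\rfloor$ $S$'s (so the induction hypothesis applies with the same $\ell'$) or packs one $L$ plus the forced number of $S$'s (so the induction hypothesis applies with $\ell'-1$, and one checks that the number of $S$'s \OPT\ places in the first bin is at least what is needed). This sidesteps any discussion of fronts, key bins, or constructing a second reasonable packing, which is why it is cleaner.
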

\begin{proof}
by induction on $s'$ and $\ell'$.

If $\ell' =0$, then, 
by Proposition~\ref{oneitem}, $R(\sigma,s',0) = OPT(\sigma,s',0)$.

If $s'=0$, then any
packing
puts one item of size $L$ into each bin that it uses.
By Proposition \ref{optspacelower}, $OPT(\sigma,2\ell',0) \geq 2\ell' S$.
Since each bin in $\sigma$ has size at most 
$4S-3$, it follows that $R(\sigma,0,\ell') \leq \ell'(4S-3) \leq OPT(\sigma,2\ell',0) + \ell'(2S-3)$.

Let $s',\ell' \geq 1$ and suppose the claim is true for $s''$ and $\ell''$,
if $0 \leq s'' < s'$
or $0 \leq \ell'' < \ell'$.
Let $\sigma$ be any sequence of bins,  let $B \leq M = 4S-3$ be the size
of the first bin in $\sigma$, and let $\sigma'$ be obtained from $\sigma$
by removing
its first bin. Since $s',\ell' \geq 1$, it follows that $s'+2\ell' \geq 3 \geq 
\lfloor B/S \rfloor$. Note that
$OPT(\sigma,s'+2\ell',0) = B + OPT(\sigma',s'+2\ell'-\lfloor B/S \rfloor,0)$,
because \opt packs $\lfloor B/S \rfloor$ items of size $S$ in the first bin.

Consider any 
algorithm.
If it packs the first bin with only items of size $S$,
then, it packs $\lfloor B/S \rfloor$ items into that bin
and the total space it uses is at most
$B + R(\sigma',s'-\lfloor B/S \rfloor,\ell')$,
which, by the induction hypothesis, is at most
$B + OPT(\sigma',s'+2\ell'-\lfloor B/S \rfloor,0) + \ell'(2S-3) =
OPT(\sigma,s'+2\ell',0) + \ell'(2S-3)$.
So assume that, in the first bin,  the algorithm packs one item of size $L$
 plus possibly one item of size $S$.

If $B < L + S$, then the algorithm packs no items of size $S$
into the first bin and uses at most $B + R(\sigma',s',\ell'-1)$ space.
By the induction hypothesis,
$R(\sigma',s',\ell'-1) \leq OPT(\sigma',s'+2\ell'-2,0) + (\ell'-1)(2S-3)$.
Since \opt packs at most two items of size $S$ into the first bin,
$B + OPT(\sigma',s'+2\ell'-2,0) \leq OPT(\sigma,s'+2\ell',0)$. 
Hence, the space used by the algorithm is at most
$B +  R(\sigma',s',\ell'-1) \leq OPT(\sigma,s'+2\ell',0) + \ell'(2S-3)$.

Otherwise, $L+S \leq B \leq M = 2L -1$
and the algorithm also packs one item of size $S$ into the first bin.
Then the space used by the algorithm is at most
$B + R(\sigma',s'-1,\ell'-1) \leq B + OPT(\sigma',s'-1+2\ell'-2,0) + (\ell'-1)(2S-3)$,
by the induction hypothesis.
Since \opt packs at most three items of size $S$
into the first bin,
$B + OPT(\sigma',s'+2\ell'-3,0) \leq OPT(\sigma,s'+2\ell',0)$.
Hence,  the space used by the algorithm is at most
$B+ R(\sigma',s'-1,\ell'-1) \leq OPT(\sigma,s'+2\ell',0)+ \ell'(2S-3)$.

It follows that, in all cases, $R(\sigma,s',\ell') \leq OPT(\sigma,s'+2\ell',0) + \ell'(2S-3)$.
\end{proof}

Now, we consider the case where \opt is the first to
run out of something and it runs of \Ss. 
In this case, 
after this point, the worst sequence for the reasonable packing has bins of size $L$ followed by one bin of size $M$.

\begin{lemma}
\label{OptNoSmall}
For all sequences of bins $\sigma$ and all integers $s',\ell' \geq 0$,
if $2k = s' + 2\ell'$, then
$R(\sigma,s',\ell') \leq \OPT(\sigma,0,k) + (s'+\ell'-k-1)L + M.$
\end{lemma}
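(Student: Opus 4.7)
I would prove the lemma by induction on $s'+2\ell'=2k$, paralleling the structure of the proof of Lemma~\ref{OptNoLarge}. The base case is $s'=0$, so $k=\ell'$: by Proposition~\ref{oneitem}, $R(\sigma,0,\ell')=\OPT(\sigma,0,\ell')$, and the bound reduces to $M\geq L$, which holds since $M=2L-1$.

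For the inductive step with $s'\geq 2$, I would let $B$ be the size of the first bin of $\sigma$ and $\sigma'$ be the rest, and case on $B$ together with the configuration a maximum-cost reasonable packing uses in that bin. The clean configurations are $(0,1)$ (forced at $B=L$ and chosen when $B\in[L+1,L+S-1]$) and $(2,0)$ (chosen when $B\in[L+1,L+S-1]$); each decreases $s'+2\ell'$ by exactly $2$ while keeping $s'$ even, so the IH applies directly to $(s',\ell'-1)$ or $(s'-2,\ell')$ on $\sigma'$ and, combined with $\OPT(\sigma,0,k)=B+\OPT(\sigma',0,k-1)$ when $B\geq L$, the bound telescopes.

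For the parity-breaking configurations $(1,1)$ (forced at $B=L+S$, or chosen when $B\in[L+S+1,2L-1]$) and $(3,0)$ (chosen when $B\in[L+S+1,2L-1]$), the recursion lands at an odd number of remaining $S$-items. I would use monotonicity of $R$ in $s'$---which holds because any reasonable packing of $(s''-1,\ell'')$ extends to one of $(s'',\ell'')$ by placing an extra $S$-item in a fresh bin---to replace $R(\sigma',s'-1,\ell'-1)$ by $R(\sigma',s',\ell'-1)$ in the $(1,1)$ case and $R(\sigma',s'-3,\ell')$ by $R(\sigma',s'-2,\ell')$ in the $(3,0)$ case; the IH then applies at an even point and the slack terms match because $\OPT(\sigma,0,k)$ already pays for $B\geq L$ of the first bin.

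The main obstacle is the $(1,0)$ configuration, forced when $B\in[S,L-1]$: $R$ pays $B\geq S$ for a single $S$-item in a bin that $\OPT$ skips entirely, and the naive monotonicity step yields the impossible requirement $B\leq 0$. To overcome this, I would group the first bin together with the next bin of $\sigma$. If the next bin is also in $[S,L-1]$, the pair packs two $S$-items at total cost at most $2(L-1)$, the recursion lands at an even $s'-2$, and the IH's reduced slack of $(s'/2-2)L+M$, combined with the extra at least $L$ contributed by the ``missing'' large bin in $\OPT$'s recursion (since $\OPT(\sigma,0,k)-\OPT(\sigma'',0,k-1)\geq L$ where $\sigma''$ drops the two small bins), exactly closes the gap. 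If the next bin has size at least $L$, its cost is already contained in $\OPT(\sigma,0,k)$, and a finer sub-case analysis on how that bin is packed (one of $(0,1),(2,0),(1,1),(3,0)$) closes the inequality in each sub-case. Carefully coordinating this two-bin case analysis with the preceding single-bin cases is the technical crux of the proof.
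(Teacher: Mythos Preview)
Your inductive plan has a genuine gap in exactly the sub-case you flagged as ``the technical crux.'' Suppose the first bin has size $B\in[S,L-1]$ (forcing $(1,0)$) and the next bin has size $B'\geq L$ with the reasonable packing using $(0,1)$ there---this is in fact \emph{forced} when $B'=L$. After these two bins the reasonable packing has $(s'-1,\ell'-1)$ items left while $\OPT$ has $(0,k-1)$, and $\OPT(\sigma,0,k)=B'+\OPT(\sigma'',0,k-1)$. Applying your monotonicity step $R(\sigma'',s'-1,\ell'-1)\le R(\sigma'',s',\ell'-1)$ and then the induction hypothesis with $k-1$ gives
\[
R(\sigma,s',\ell')\le B+B'+\OPT(\sigma'',0,k-1)+(s'/2-1)L+M,
\]
whereas you need $B'+\OPT(\sigma'',0,k-1)+(s'/2-1)L+M$; the comparison demands $B\le 0$, which fails. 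Going to $(s'-2,\ell'-1)$ instead is the wrong direction of monotonicity, and there is no other even neighbour that helps. So your claim that ``a finer sub-case analysis on how that bin is packed closes the inequality in each sub-case'' is simply false for this sub-case. Worse, if the second bin is followed by further size-$L$ bins (each forcing $(0,1)$), the parity defect persists indefinitely, so no fixed-depth look-ahead will work; you would need to process an entire run of small and size-$L$ bins before the invariant realigns, and your proposal does not do this.

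The paper sidesteps the parity issue entirely by an extremal argument rather than induction: it fixes $s',\ell'$, takes $\sigma$ maximizing $R(\sigma,s',\ell')-\OPT(\sigma,0,k)$ with minimum total capacity, and then uses local exchange moves (shrinking a bin to size $L$, splitting a bin into two, etc.) to show that in the extremal $\sigma$ every bin but possibly the last has size exactly $L$ and the last has size $L$ or $M$. On such a sequence both $R$ and $\OPT$ are trivial to compute and the bound follows. This approach never needs to reconcile an odd number of $S$-items with an integer $k$, which is precisely where your induction breaks down.
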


\begin{proof}
 If $s'=0$,
then $k = \ell'$ and
 the lemma follows from Proposition~\ref{oneitem}.
So, we assume that $s'\geq 1$.
Let $\ell'$ and $k$ be such that $2k=s'+2\ell'$. Note that $s'$ is even, so $s' \geq 2$.
Let $\sigma$ be a
sequence of bins that maximizes $R(\sigma,s',\ell') - OPT(\sigma,0,k)$.
Suppose that, among all such sequences,
$\sigma$  has the smallest capacity (i.e. the smallest sum of bin sizes).

Consider a reasonable packing with cost $R(\sigma,s',\ell')$
for
$\sigma$.
If $\sigma$ has any bins of size in $[S,L-1]$,
they are not used by \opt, which has no small items.
By the minimality of the capacity of $\sigma$,
they are used by the reasonable packing 
to pack one item of size $S$. 
All other bins have size at least $L$
and they are used by both the reasonable packing and \opt, unless 
one of them has already packed all its items.

We begin by making a number of observations about the last bin in $\sigma$,
which all follow from the definition of $\sigma$.

The last bin in $\sigma$
must be used by either \opt or the reasonable packing; otherwise,
removing it would decrease the capacity of the sequence without changing the total space
used by either packing.

If the last bin is only used by the reasonable packing, it must have size $M$;
otherwise, replacing it by a bin of size $M$ would increase the total space used by the
reasonable packing without changing the total space used by \opt.

If the last bin is only used by \opt, 
(to pack one item of size $L$),
it must have size $L$;
otherwise, replacing it by a bin of size $L$
would decrease the total space used by \opt
without changing the total space used by the reasonable packing.

If the last bin is used by \opt and the reasonable packing
to each pack one item, it must have size $L$;
otherwise, replacing it by a bin of size $L$
would decrease the capacity of the sequence
while changing the total space used by \opt and
the reasonable packing by the same amount.

Finally,
the last bin cannot be used by the reasonable packing
to pack more than one item;
otherwise, it must have size between $2S$ and $M$. If it is replaced
by a bin of size $L$ followed by a bin of size $M$,
the total space used by the reasonable packing would increase, and the
space used by \opt would not.

Now suppose that $\sigma$ has a bin, other than its last bin,
that does not have size $L$.
Let $b$ be the last such bin.
Let $s''$ and $\ell''$ be the number of items of size $S$ and $L$, respectively,
that the reasonable packing has remaining immediately after packing bin $b$.
These items are packed one per bin in each of the next $s''+\ell''$ bins.
Since the packing is reasonable, the items of size $L$ are packed before the
items of size $S$.
If $k''$ is the number of items (of size $L$) that \opt has remaining immediately
after bin $b$, they are packed one per bin in each of the next $k''$ bins.

If $size(b) \leq L-1$, then the reasonable packing packs a single item of size $S$ in this bin.
Increasing the size of $b$ to $L$ does not change which bins the reasonable packing uses:
it still packs one item in each of the $1+s''+\ell''$ bins starting with $b$, since all of them,
except possibly the last, have size $L$.
This increases the total space used by the reasonable packing.
However, the total space used by \opt remains unchanged.
This is because, if \opt has any items remaining
immediately before bin $b$ is packed, they are each packed in a bin of size $L$.
This contradicts the definition of $\sigma$.
Since $size(b) \neq L$, it follows that
$size(b) > L$.

If bin $b$ is only used by \opt (to pack one item of size $L$),
then replacing it by a bin of size $L$ would decrease the total
space used by \opt without changing the total space used by the
reasonable packing, contradicting the definition of $\sigma$.
Therefore the reasonable packing packs at least one item in bin $b$.

Suppose that \opt uses bin $b$. If the reasonable packing
packs one item in bin $b$,
then decreasing the size of bin $b$ to $L$
would decrease the capacity of the sequence
while changing the total cost incurred by \opt and
the reasonable packing by the same amount,
contradicting the definition of $\sigma$.
Hence the reasonable packing packs $h > 1$ items in bin $b$.
Decreasing the size of bin $b$ to $L$ and adding $h-1$ bins of size $L$
immediately following bin $b$
decreases the total space used by \opt by $size(b) - L > 0$,
since \opt still packs each item following bin $b$ in a bin of size $L$.
It also increases the total space used by the reasonable packing by
$hL - size(b) > 0$,
since
it packs one item in each bin,
from bin $b$ onward.
This contradicts the definition of $\sigma$.
Therefore, \opt does not use bin $b$.
Since $size(b) > L$, \opt runs out of items before bin $b$.

We now show that, at the point \opt runs out of items,
the reasonable packing only has items of size $S$.

Consider the last bin $b'$ in $\sigma$ prior to which
\opt has at least one item (of size $L$) available
and the reasonable packing has items of both size $S$ and size $L$ available.
Then, up to including when bin $b'$ is packed,
we claim that
twice the number of \Ls that \opt has available is at least
the number of \Ss plus twice the number of \Ls that
the reasonable packing has available.
In particular, it is true before the first bin of $\sigma$ is packed, since $2k = s' + 2\ell'$.
For each bin of size at least $L$, up to and including bin $b'$,
\opt packs one \Litem and the reasonable algorithm packs at least one \Litem
or at least two \Ss, so the inequality remains true.
Moreover, \opt packs no items in bins whose size is less than $L$,
so such bins do not cause the inequality to become false.

It follows that if \opt runs out of items immediately after bin $b'$,
then the reasonable packing does too.
Since \opt does not use bin $b$, but the reasonable packing does,
\opt runs out of items prior to the reasonable packing,
so this is impossible.
If the reasonable packing runs out of \Ss immediately after bin $b'$,
then \opt has at least as many \Ls left as the reasonable packing
and, hence, runs out of items at or after the reasonable packing,
so there is a contradiction again.

Therefore, immediately
after
bin $b'$ and, hence, immediately before
bin $b$, the reasonable packing only has \Ss left.
Since bin $b$ is not the last bin, the reasonable packing is feasible,
and $size(b)>L$, the reasonable packing
has at least two items in $b$.
Replace bin $b$ by a bin of size
$size(b)-S \geq S$ followed by a bin of 
size $L$. Then the reasonable packing must pack one fewer item of size $S$ in
the first of these bins and  one item of size $S$ in the second.
This adds $L-S$ to the the total space used by the reasonable packing,
but decreases
the total space used by \opt,
contradicting the definition of $\sigma$. 
Therefore, bin $b$ does not exist.

Thus, we may assume that $\sigma$ consists of $s'+\ell'-1 \geq k$ bins of 
size $L$,
followed by at most one bin of size $M$.
Then 
$OPT(\sigma,0,k) = kL$ and $R(\sigma,s',\ell')
\leq (s'+\ell'-1)L + M$,
so $R(\sigma,s',\ell') - OPT(\sigma,0,k) 
\leq (s'+\ell'-k-1)L + M$.
\end{proof}

Next, we consider the case where $R$ is the first to
run out of something and it runs of \Ls.

\begin{lemma}
\label{AlgNoLarge}
For all sequences of bins $\sigma$ and all integers $s',\ell' \geq 0$,
if $k \geq s'+2\ell'$, then
$R(\sigma,k,0) \leq \OPT(\sigma,s',\ell') + (k-s'-\ell'-1)L +M$.
\end{lemma}

\begin{proof}
by induction on $s'$ and $\ell'$.

If $\ell'=0$, then
the first $s'$ items of a reasonable packing are packed the same as \opt would pack them.
Since 
the packing is feasible,
each bin that contains one of the remaining $k-s'$ items of size $S$
has
at most $S-1$
empty space,
except for the last such bin, which has
at most $M-S$
empty space.
Thus $R(\sigma,k,0) \leq \OPT(\sigma,s',0) + (k-s')S + (k-s'-1)(S-1)+ M-S = \OPT(\sigma,s',0) + (k-s'-1)L +M$.

Next consider $s'=0$.
Any 
packing of $k$ items of size $S$ has
at most $S-1$
empty space
in each bin
that it uses, except the last, which has
at most $M-S$
empty space.
Since it uses at most $k$ bins,
its cost is at most $kS + (k-1)(S-1) + M-S
=(k-1)L +M$.
By Proposition \ref{optspacelower}, $\OPT(\sigma,0,\ell') \geq \ell' L$, so
$R(\sigma,k,0) \leq (k-1)L+ M  = \ell' L+(k-\ell'-1)L+M \leq
\OPT(\sigma,0,\ell') +(k-\ell'-1)L+ M$.

Now, let $s',\ell' > 0$ and suppose the claim is true if $s'' < s'$ or $\ell'' < \ell'$.
Let $\sigma$ be any sequence of bins, and
let $B \leq M = 4S-3$ be the size of its first bin.
Since $k \geq s'+2\ell' \geq 3$, any reasonable packing of $k$ items
of size $S$ puts $\lfloor B/S \rfloor \leq 3$
items into this bin,
so has cost at most $B + R(\sigma', k-\lfloor B/S \rfloor,0)$,
where $\sigma'$ is the sequence $\sigma$ without its first bin.

If \opt packs the first bin with only items of size $S$,
then, it packs $\lfloor B/S \rfloor \leq s'$ items into that bin
and the total space it uses is $B + \OPT(\sigma',s'-\lfloor B/S \rfloor,\ell')$.
By the induction hypothesis,
$B+R(\sigma', k-\lfloor B/S \rfloor,0) \leq  B+
\OPT(\sigma',s'-\lfloor B/S \rfloor,\ell') +  (k-\lfloor B/S\rfloor-(s'-\lfloor B/S \rfloor)-\ell'-1)L +M
= \OPT(\sigma,s',\ell') + (k-s'-\ell'-1)L+M$.

Otherwise \opt packs the first bin with one item of size $L$ and
$\lfloor (B-L)/S \rfloor \geq \lfloor B/S \rfloor -2$
items of size $S$.
Since $k - \lfloor B/S \rfloor \geq s'+2\ell' - \lfloor (B-L)/S \rfloor -2 = (s'-\lfloor (B-L)/S \rfloor) +2(\ell'-1)$,
it follows 
by the induction hypothesis that $B+R(\sigma', k-\lfloor B/S \rfloor,0) \leq 
B+\OPT(\sigma', s'-\lfloor (B-L)/S \rfloor,\ell'-1) +
(k-\lfloor B/S \rfloor-(s'-\lfloor B/S-2 \rfloor)-(\ell'-1)-1)L + M
\leq \OPT(\sigma,s',\ell') +(k-s'-\ell'-1)L +M$. Note that this last case
is the only place that the possibility of $k$ being greater than
$s'+2\ell'$ is used.

Hence, $R(\sigma,k,0) \leq B + R(\sigma', k-\lfloor B/S \rfloor,0) \leq \OPT(\sigma,s',\ell')+(k-s'-\ell'-1)L +M$.
\end{proof}

Finally, we consider the case where $R$ is the first to
run out of something and it runs of \Ss.

\begin{lemma} 
\label{AlgNoSmall}
For all sequences of bins $\sigma$ and all integers $s',\ell' \geq 0$,
if $2k \geq s'+2\ell'$,
then
$R(\sigma,0,k) \leq \OPT(\sigma,s',\ell')+\min\{ 0,\ell'-s'\} S +(k-\ell')M$.
\end{lemma}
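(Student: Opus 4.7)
The plan is to mirror the proof of Lemma~\ref{AlgNoLarge}, proving the statement by induction on $|\sigma|$ with two special base cases: $\ell'=0$ and $s'=0$. When $\ell'=0$, the bound reduces to $R(\sigma,0,k)\leq \OPT(\sigma,s',0)-s'S+kM$, which follows immediately from the trivial bound $R(\sigma,0,k)\leq kM$ ($R$ uses at most $k$ bins, each of size at most $M$) and $\OPT(\sigma,s',0)\geq s'S$ (Proposition~\ref{optspacelower}). When $s'=0$, both $R$ and \opt\ pack one $L$-item per used bin, taking the first $k$ (respectively $\ell'$) bins of size at least $L$ in $\sigma$; the gap is the cost of the $k-\ell'$ extra bins that $R$ uses, each of size at most $M$.

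For the inductive step, assume $s',\ell'\geq 1$ and let $b_1$ be the first bin of $\sigma$, of size $B$; write $\sigma'$ for $\sigma$ with $b_1$ removed. By Corollary~\ref{reasonable} we may take \opt\ to be reasonable, and since both item sizes remain available when $b_1$ arrives, $b_1$ lies in \opt's front and is filled by the reasonable rules. If $B<L$, then $R$ skips $b_1$ while \opt\ packs exactly one $S$-item in $b_1$; applying the induction hypothesis to $(\sigma',s'-1,\ell',k)$ and observing that the jump $\min\{0,\ell'-s'+1\}S-\min\{0,\ell'-s'\}S$ is at most $S\leq B$ closes the case.

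If $B\geq L$, then $R$ packs one $L$-item in $b_1$ and \opt\ packs $b_1$ in exactly one of four ways: one $L$; one $L$ together with one $S$; two $S$'s (possible only when $B\geq L+1$); or three $S$'s (possible only when $B\geq L+S+1$). In each subcase I set $(s'',\ell'')$ to \opt's remaining item counts, verify that $2(k-1)\geq s''+2\ell''$ is immediate from $2k\geq s'+2\ell'$, apply the induction hypothesis to $(\sigma',s'',\ell'',k-1)$, and compare with the target. When \opt\ uses an $L$ in $b_1$, everything collapses because $\ell''=\ell'-1$ makes the $(k-1-\ell'')M$ term match the target $(k-\ell')M$ term, leaving only a trivial comparison between the $\min\{0,\cdot\}S$ terms.

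The main obstacle is the two- and three-$S$'s subcases, where $\ell''=\ell'$ so the IH $M$-term falls short of the target by one factor of $M$; this extra $M$ must absorb the jump $\min\{0,\ell'-s'+j\}S-\min\{0,\ell'-s'\}S$ for $j\in\{2,3\}$. A short case split on the sign of $\ell'-s'$ shows the jump is at most $jS$, so it remains to check $jS\leq M=4S-3$. For $j=2$ this is $S\geq 2$, which holds because $S>1$ is an integer. For $j=3$ the bound needs $S\geq 3$, but this comes for free: the three-$S$'s subcase only arises when $B\geq L+S+1=3S$, and $B\leq M=4S-3$ then forces $S\geq 3$. With these inequalities in hand, every subcase closes and the induction is complete.
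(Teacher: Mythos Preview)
Your proposal is correct and follows the same inductive route as the paper: base cases $s'=0$ and $\ell'=0$ handled directly, then for $s',\ell'\geq 1$ a split on $B<L$ versus $B\geq L$, with a case analysis on how \opt\ packs $b_1$ in the latter. Your treatment of the $3S\leq M$ obstruction (three $S$'s force $B\geq 3S$, whence $S\geq 3$) matches the paper's; the paper additionally remarks that when $S=2$ one has $M=5$, so at most two $S$-items fit and $2S\leq M$ suffices.

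One minor imprecision is worth flagging: the justification ``since both item sizes remain available when $b_1$ arrives, $b_1$ lies in \opt's front'' is not literally correct, because the front ends once at most \emph{two} $S$-items remain, not zero. When $s'\in\{1,2\}$, a packing that is reasonable in the sense of the paper's definition could still place a single $S$-item in a bin of size $\geq L$; this case is not on your list, and worse, it gives $s''+2\ell''=s'+2\ell'-1$, so the hypothesis $2(k-1)\geq s''+2\ell''$ of the induction can fail. The clean fix is to take \opt\ from Lemma~\ref{emptyspace} rather than Corollary~\ref{reasonable}: a lone $S$-item in a bin of size at least $L$ while $L$-items remain is a \emph{bad} bin, so an \opt\ with no bad bins never does this, and your four-way split is then exhaustive. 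The paper's own proof glosses over the same point---it writes ``since \opt\ is reasonable'' and then uses $s''+2\ell''\leq s'+2\ell'-2$, which likewise requires excluding the single-$S$ case.
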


\begin{proof}
by induction on $s'$ and $\ell'$.

If $s'=0$, then
the first $\ell'$ items are packed the same way in a reasonable packing as \opt would pack them.
Each of the remaining $k-\ell'$ items of size $L$ are packed one per bin using bins of size at most $M$,
for a total cost at most $\OPT(\sigma,0,\ell') +(k-\ell')M
= \OPT(\sigma,0,\ell')+\min\{ 0,\ell'-s'\}S +(k-\ell')M$, since $\ell' - s' \geq 0$.

If $\ell'=0$, then, by Proposition \ref{optspacelower},
$\OPT(\sigma,s',\ell') \geq s'S+ \ell'L = \max\{ 0,s'-\ell'\} S$,
so $\OPT(\sigma,s',\ell') + \min\{ 0,\ell'-s'\}S \geq 0$.
In a reasonable packing, each of the $k$ items of size $L$
are packed one per bin using bins of size at most $M$
for a total cost at most $kM \leq \OPT(\sigma,s',\ell')+\min\{ 0,\ell'-s'\}S 
 +(k-\ell')M$.

Now, let $s',\ell' \geq 1$ and suppose the claim is true if $s'' < s'$ or $\ell'' < \ell'$.
Let $\sigma$ be any sequence of bins,
let $B \leq M = 4S-3$ be the size of its first bin, and let $\sigma'$
denote the same sequence as $\sigma$, but with the first bin removed.

If $B < L$, then a reasonable packing leaves the first bin of $\sigma$ empty,
since it has no items of size $S$. Hence, it uses space at most $R(\sigma',0,k)$.
Since $2k \geq s'+2\ell' \geq (s'-1)+2\ell'$, the induction hypothesis
implies that $R(\sigma',0,k) \leq \OPT(\sigma',s'-1,\ell')+\min\{ 0,\ell'-(s'-1)\}S +(k-\ell')M$.
\opt packs one item of size $S$ into the first bin of $\sigma$,
so $\OPT(\sigma,s',\ell') = B + \OPT(\sigma',s'-1,\ell')$.
Thus, since $S \leq B$, it follows that $R(\sigma',0,k) \leq
\OPT(\sigma',s'-1,\ell') + B- S+\min\{ 0,\ell'-(s'-1)\}S +(k-\ell')M
\leq \OPT(\sigma,s',\ell')+\min\{ 0,\ell'-s'\}S  +(k-\ell')M$.

Otherwise, $B\geq L$. Then, after packing the first bin of $\sigma$,
a reasonable packing has $k-1$ items (of size $L$) remaining.
Suppose \opt has $s''$ items of size $S$ and $\ell''$ items of size $L$ remaining
after packing this bin.
Hence, by the induction hypothesis, to pack its $k-1$ \Ls, any reasonable packing uses space at most $R(\sigma', 0,k-1) \leq \OPT(\sigma',s'',\ell'')+\min\{ 0,\ell''-s''\}S +(k-1-\ell'')M$,
since $2(k-1) \geq s'+2(\ell'-1) = (s'-2)+2\ell'  \geq s''+2\ell''$.
Therefore, $R(\sigma,0,k) = B + R(\sigma', 0,k-1)
\leq B + \OPT(\sigma',s'',\ell'')+\min\{ 0,\ell''-s''\}S +(k-1-\ell'')M
= \OPT(\sigma,s',\ell')+\min\{ 0,\ell''-s''\}S +(k-1-\ell'')M$.

It remains to be shown that
\begin{equation}
\label{eqn1}
\min\{ 0,\ell''-s''\}S +(k-1-\ell'')M \leq \min\{ 0,\ell'-s'\}S +(k-\ell')M.
\end{equation}
Since \opt is reasonable and $B \leq M = 2L - 1 = 4S-3$, either $\ell'' = \ell'-1$ and
$s'' \geq s' - 1$ or $\ell'' = \ell'$ and  $s'' \geq s'-3$.
In the first case, \ref{eqn1} follows since  $\ell''-s'' \leq (\ell' - 1) - (s' - 1) = \ell' - s'$ and $k -1 - \ell'' = k - \ell'$,
so $\min\{ 0,\ell''-s''\}S +(k-1-\ell'')M \leq \min\{ 0,\ell'-s'\}S  +(k-\ell')M$.
In the second case, \ref{eqn1} follows since $\ell''-s'' \leq \ell' -s' +3$ and $k-1 - \ell'' = k - \ell' - 1$,
so $\min\{ 0,\ell''-s''\}S +(k-1-\ell'')M \leq \min\{ 0,\ell'-s'\}S +3S +(k-\ell')M - M
\leq \min\{ 0,\ell'-s'\}S +(k-\ell')M$, since $3S \leq 4S-3 = M$, provided $S \geq 3$.
Thus, asymptotically (for $S\geq 3$), $R(\sigma,0,k) \leq \OPT(\sigma,s',\ell')+\min\{ 0,\ell''-s''\}S +(k-1-\ell'')M
\leq \OPT(\sigma,s',\ell')+\min\{ 0,\ell'-s'\}S +(k-\ell')M$.
\end{proof}

\section{Simple Non-Optimal Algorithms}
\label{sec:simple}
It is helpful to understand why simple reasonable algorithms
are not optimal for the Restricted Grid Scheduling problem.
The following examples show why a number of natural
candidates do not work well enough.

\paragraph{Example 1} 
Consider the reasonable algorithm that
always uses \Ss, when there  is a choice. 
Let $s = 2\ell$ and let $\sigma$ consist of
$\ell$ bins of size $2S$, followed by $s$ bins of size $S$,
and then $\ell$ bins of size $\maxbinsize$.
For this instance, the algorithm has
a \perfratio{} of
$\frac{\ell\cdot 2S+\ell\cdot \maxbinsize}
{\ell \cdot 2S + s \cdot S}=\frac{3}{2} - \frac{3}{4S}$, which is greater
than $5/4$ for large $S$.

\vspace*{-1ex}
\paragraph{Example 2} 
Consider the reasonable algorithm that
always uses \Ls, when there  is a choice. 
Let $s = 2\ell$ and let $\sigma$ consist of
$\ell$ bins of size $2S$, followed by $s-1$ bins of size $L$
and then one bin of size $\maxbinsize$.
For this instance, the algorithm has a \perfratio{} of
$\frac{\ell\cdot 2S+(s-1)\cdot L+ \maxbinsize}
{\ell \cdot 2S+\ell \cdot L}
=\frac{3}{2}  + \frac{1}{2\ell} - \frac{3/\ell+1}{2(4S -1)}$,
which is also greater
than $5/4$ for large $S$.

\medskip

For the two instances considered above, the reasonable algorithm
which alternates between
using two \Ss and one \Litem, when it has a choice, would do well, achieving a 
performance ratio of $5/4$.
However, it does not do as well on other instances.

\vspace*{-1ex}
\paragraph{Example 3} 
Let $2s=3\ell$ and let $\sigma$ consist of
$\ell$ bins of size $2S$, followed by $s$ bins
of size $S$ and $\ell/2$ bins of size $\maxbinsize$.
For this instance, the algorithm which alternates between
using two \Ss and one \Litem, when it has a choice, has a
\perfratio{} of $\frac{\ell \cdot 2S + (s- \ell) \cdot S+
\ell/2 \cdot \maxbinsize}
{\ell \cdot 2S+s\cdot S}
=1+\frac{\frac{\ell}{2}\cdot(\maxbinsize-2S)}{(2\ell+s)\cdot S}
=1+\frac{2}{7}-\frac{3}{7S}$, which is larger than $5/4$ for
$S$ sufficiently large.

\medskip

As the above examples partially illustrate, once either the online
algorithm or \opt has run out of one type of item (\Ss or \Ls), the
adversary can give bins which make the online algorithm waste
a lot of space. The algorithm we present in the next section
aims to postpone this
situation long enough to get a good ratio.

When $3s = 2 \ell$, the lower bound is in Case II.
It starts by giving the algorithm
$\lfloor s/2 \rfloor$ 
bins of size $2S$
and it considers the number $k$ of these bins in which
the algorithm packs two \Ss. 
The best the algorithm can do is to have
$k \approx \max\{ \lfloor s/2 \rfloor - s/8 - \ell/4 +1, \lfloor s/2 \rfloor - s/3 +1\}
=\lfloor s/2 \rfloor - s/3 +1 \approx s/6$, so that the same ratio
is obtained for both subcases.
This is the same as using one \Litem 
twice as often as two \Ss, when it has a choice.

\paragraph{Example 4} 
Let $3s = 2\ell$.
Consider the reasonable algorithm which uses one \Litem 
twice as often as two \Ss, when it has a choice. 
Let $\sigma$ consist of
$\ell$ bins of size $2S$, followed by $s$ bins
of size $S$ and $\ell/3$ bins of size $\maxbinsize$.
For this instance, the algorithm
packs $\ell/3 = s/2$ bins of size $2S$ with two \Ss,
so it has a
\perfratio{} of $\frac{\ell \cdot 2S + (\ell/3) \cdot M}
{\ell \cdot 2S+s\cdot S}
= 1+\frac{\frac{s}{2}\cdot\maxbinsize-sS}{(2\ell+s)\cdot S}
=1+\frac{S -3/2}{4 \cdot S}
=\frac{5}{4} -\frac{3}{8S}$, which
exceeds the lower bound of $1 + \frac{2}{3 + 6\ell/s}  = \frac{7}{6}$
for $S$ sufficiently large.

This example shows that it is not always possible to achieve the lower bound
by choosing to use two \Ss instead of one \Litem a fixed proportion of the time.
After $s/6$ bins have been packed with two \Ss,
it turns out to be better to always use an \Litem, when there is a choice.
This motivates our use of a second phase to obtain the best competitive ratio.
The number of \Ss that the algorithm has packed, in this case $s/6 \times 2 = s/3$,
indicates when the first phase is done.

\section{\newalg}
\label{sec:upper}
In Figure \ref{fig:newalg},
we present a reasonable algorithm, \newalg,
for the Restricted Grid Scheduling problem.
It is asymptotically optimal:
the competitive ratio matches the lower bound in Section~\ref{sec:lower}
for all three
ranges of the ratio $s/\ell$
of the initial numbers of \Ss and \Ls.
Not surprisingly,
\newalg has two phases.

In the first phase, it attempts to balance
the number of \Ss and the number of \Ls it uses, aiming for the ratio indicated
by the lower bound.
When it receives bins where it has a choice of using
one \Litem or two \Ss in part or all of that bin (i.e., bins with sizes
in the ranges $[2S,3S-2]$ and $[3S,4S-3]$), it
uses one \Litem in a certain fraction of them
(and increments the variable \LB)
and uses two \Ss in
the remaining
fraction (and increments \SB).
The fraction
varies 
depending on the original
ratio $s/\ell$:
at least 2, between 2 and 6/5, and less than 6/5.
It
is enforced by a macro called
UseLs, which indicates that an \Litem should be packed if and
only if $\LB \leq r\SB$, where $r$ is the target
ratio of \LB to \SB.
For example, when
the number of bins containing one \Litem should be within $1$ of
the number of bins containing two \Ss,
we have  UseLs = (\LB $\leq$ \SB). Both \LB and \SB start
at zero, so, in this case, \newalg starts by choosing
to use one \Litem 
and then alternates.
Note that \SB and \LB do not change after Phase 1 is completed.

In the middle range for the ratio $s/\ell$, there 
are two different fractions used.
The first fraction is used until
\SB reaches
a specific value. Afterwards,
its choices alternate.
To do so, for the rest of Phase 1, it 
records the number of times it chooses to pack two \Ss
in a bin and the number of times it chooses to pack one \Litem
in \SBt and \LBt,
respectively. 
The variables \SBt and \LBt are also zero initially.

\newalg uses \SC and \LC throughout the algorithm
to keep track of the total
numbers of \Ss and \Ls it has packed, whether or not it had a choice.
(Specifically, \SC is incremented every time an item of size $S$ is packed
and \LC is incremented every time an \Litem is packed.)
It continues with Phase 1 until 
it has packed a certain number of \Ss or \Ls (depending on 
the relationship between $s$ and $\ell$).
For each of the three ranges of $s/\ell$,
we define a different
condition for ending Phase 1. In Phase 2, only \Ss or
only \Ls are packed
where a reasonable algorithm has a choice,
depending on whether one would expect an excess of \Ss
or \Ls, given the ratio
$s/\ell$.

\paragraph{An example of an execution of \newalg} 
Consider running \newalg
when $\ell = s/2$ is even and the sequence of bins consists of
$3\ell/2$ bins of size $2S$, followed by $\ell$ bins of size $S$, and then $\ell/2$ bins of size $M$.
During Phase 1, \newalg alternates between packing an item
of size $L$ and  two \Ss, starting with an item of size $L$, because
the condition ($\LB \leq \SB$) will alternate between being true and false.
Phase 1 ends when $ \ell/2$ \Ls have been packed.
In Phase 2, the remaining $\ell/2 $ \Ls and $s/2+2$ \Ss are packed.
These \Ss are packed in the last $\ell/2 +1$ bins of size $2S$.
\newalg leaves the bins of size $S$ empty, because no \Ss remain.
It puts one item of size $L$ in 
each bin of size $M$. Note that \newalg has cost $(3\ell/2) \cdot 2S + (\ell/2) \cdot M = 5\ell S -3\ell/2$.
In contrast, \OPT packs all the \Ls in  bins of size $2S$, packs half the \Ss in bins of size $2S$, and packs
the other half in the bins of size $S$. Its cost is
$3\ell/2 \cdot 2S +\ell S = 4\ell S$.  The ratio of their costs is
$\frac{5\ell S-\frac{3\ell}{2}}{4\ell S}$.

The definition of the algorithm implies the following relationships.

\begin{observation}
\label{binsvscounts}
$\LB + \LBt \leq \LC$ and $2(\SB + \SBt) \leq \SC$.\\
If $\ell \leq s/2$ or $5s/6 < \ell$, then $\SBt = \LBt = 0$.
\end{observation}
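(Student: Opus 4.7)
The plan is to verify each of the three claims directly from the code of \newalg, since the observation is really a bookkeeping invariant rather than a substantive combinatorial statement. I would first restate what each variable means in terms of the actions taken by the algorithm: \SC is incremented once for every \Sitem that is placed, \LC is incremented once for every \Litem that is placed, \SB is incremented exactly once for each bin in Phase~1 in which the algorithm had a choice and chose to pack two \Ss (outside the second sub-phase of the middle range), \LB is incremented exactly once for each such bin in which the algorithm chose one \Litem, and \SBt and \LBt are the analogous counters used only during the second sub-phase of Phase~1 in the middle range $s/2<\ell\leq 5s/6$.

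For the first inequality $\LB+\LBt\leq\LC$, I would argue that each time either \LB or \LBt is incremented, the algorithm also places exactly one \Litem into the current bin, which simultaneously increments \LC by one. Since \LC is additionally incremented whenever an \Litem is forced into a bin outside the ``choice'' situations (for instance in bins of size $L$, or later in Phase~2 when the algorithm always uses \Ls where possible), the inequality can be strict, but it never fails. For the second inequality $2(\SB+\SBt)\leq\SC$, the same argument applies with a factor of two: each increment of \SB or \SBt corresponds to packing two \Ss, which contributes two to \SC, while \SC may also pick up contributions from bins of size in $[S,L-1]$, from the second item of size $S$ placed in bins of size $L+S$ or larger, and from Phase~2 choices.

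For the last sentence, I would simply point to the way Phase~1 is specified: only in the middle range $s/2<\ell\leq 5s/6$ does the algorithm change its target ratio partway through Phase~1 and begin using \SBt and \LBt to enforce alternation; in the regimes $\ell\leq s/2$ and $\ell>5s/6$ a single target ratio is used throughout Phase~1, and \SBt and \LBt are never incremented from their initial value of zero.

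I do not anticipate a real obstacle: all three claims are immediate consequences of reading off the counter updates in the pseudocode. The only thing to be careful about is to check that in the middle range no increment of \SBt or \LBt can occur outside of the prescribed second sub-phase of Phase~1, and that Phase~2 never touches \SB, \SBt, \LB, or \LBt, which is explicitly stated in the algorithm's description (``\SB and \LB do not change after Phase~1 is completed'').
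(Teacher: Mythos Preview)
Your proposal is correct and is essentially the same approach as the paper's: the paper simply states that the observation follows from the definition of the algorithm, and you have done exactly that, reading off from the pseudocode which actions increment which counters. One very minor refinement: when \SB or \SBt is incremented and the bin has size in $[3S,4S-3]$, the algorithm packs three \Ss rather than two, so \SC increases by three; this only makes the inequality $2(\SB+\SBt)\leq\SC$ looser, so your argument is unaffected.
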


The definitions of the end of Phase 1 for the various ranges of $s/\ell$
imply these inequalities.

\begin{lemma}
\label{Sbininv2}
If $s/2 < \ell \leq 5s/6$, then $\SB
+ \SBt
\leq 3s/8 - \ell/4 + 1$.\\
If $5s/6 < \ell $, then $\SB \leq s/6 + 1$.
\label{Sbininv3}
\end{lemma}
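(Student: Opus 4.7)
My plan is a direct appeal to how Phase 1 terminates in each of the two regimes. The counters $\SB$, $\SBt$, $\LB$, $\LBt$ are updated only during Phase 1, by at most one per processed bin, so it suffices to pin down the exit condition for Phase 1 and observe that this condition forces the relevant counter sum to stop growing before it exceeds the claimed bound, absorbing an additive $1$ for the bin on which the threshold is first crossed.

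First, I would dispose of the regime $5s/6 < \ell$. By Observation \ref{binsvscounts}, we already have $\SBt = 0$, so the claim reduces to $\SB \leq s/6 + 1$. The Phase 1 exit rule for this regime is designed to trigger as soon as $2\SB$ exceeds the critical threshold $s/3$ identified in the analysis of Case II of the lower bound and illustrated by Example 4 (where the ``switch to always using an \Litem\ after $s/6$ two-\Ss\ bins'' is precisely what motivates Phase 2). Combined with the per-bin increment bound of $1$, this yields $\SB \leq s/6 + 1$.

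Next, I would handle the regime $s/2 < \ell \leq 5s/6$. Here Phase 1 is explicitly split into two sub-parts: a first part during which a fixed ratio of at least $\LB/\SB$ is enforced and only $\SB$ (not $\SBt$) grows, and a second part in which the algorithm alternates and only $\SBt$ grows. The Phase 1 exit rule is set so that the sum $\SB + \SBt$ cannot surpass the critical value $3s/8 - \ell/4$ coming from Case II.1 of the lower bound; again an additional $1$ is incurred on the bin at which the threshold is first met, yielding $\SB + \SBt \leq 3s/8 - \ell/4 + 1$.

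The main obstacle is bookkeeping rather than any real mathematical difficulty: one has to match the (strict versus non-strict) inequalities and rounding conventions in the algorithm's explicit Phase 1 exit tests to the closed-form bounds stated in the lemma. A small auxiliary point is to handle the edge case in which Phase 1 ends not because the counter threshold fires but because the input supply of \Ss\ or \Ls\ has been exhausted; in that case the asserted bound either becomes vacuous or follows immediately from Observation \ref{binsvscounts} together with $\SC \leq s$ (since $2(\SB+\SBt) \leq \SC \leq s$ already gives $\SB + \SBt \leq s/2$, and one checks that $s/2 \leq 3s/8 - \ell/4 + 1$ on the boundary $\ell \leq 5s/6$).
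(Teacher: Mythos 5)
Your overall strategy is right and matches the paper's (unstated) intent: the Phase~1 exit conditions are expressed in terms of $\SC$, and combining them with the invariant $2(\SB+\SBt)\leq\SC$ from Observation~\ref{binsvscounts} and the per-iteration increment cap on $\SC$ yields the claimed bounds. However, your write-up contains two concrete problems.

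First, you repeatedly describe the Phase~1 exit rule as if it monitors $\SB$ or $\SB+\SBt$ (``as soon as $2\SB$ exceeds $s/3$,'' ``the exit rule is set so that $\SB+\SBt$ cannot surpass $3s/8-\ell/4$''). It does not; it monitors $\SC$. The link to the quantities in the lemma is precisely the factor-of-two inequality $2(\SB+\SBt)\leq\SC$, and that step needs to be made explicit. Concretely: $\SB$ and $\SBt$ are incremented only on iterations with $\mbox{not Phase1done}$, i.e.\ with $\SC<\lfloor 3s/4-\ell/2\rfloor$ (resp.\ $\SC<\lfloor s/3\rfloor$) at the start of the iteration; on such an iteration $\SC$ increases by at most $3$; hence immediately after the last such iteration $\SC\leq\lfloor 3s/4-\ell/2\rfloor+2$ (resp.\ $\lfloor s/3\rfloor+2$), and dividing by $2$ via the Observation gives $\SB+\SBt\leq 3s/8-\ell/4+1$ (resp.\ $\SB\leq s/6+1$, using $\SBt=0$). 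Your informal ``add $1$ for the threshold bin'' papers over the fact that the overshoot happens in $\SC$ (by up to $2$) and then gets halved.

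Second, your edge-case fallback is wrong: $s/2\leq 3s/8-\ell/4+1$ simplifies to $s/8+\ell/4\leq 1$, which fails for essentially all $s,\ell$ in the stated range. Fortunately the edge case you are worried about does not arise. Since $\SB$ and $\SBt$ are incremented only under $\mbox{not Phase1done}$, their final values are already fixed by the value of $\SC$ at the moment of the last increment, independently of whether items are later exhausted; and in the range $s/2<\ell$ the threshold $\lfloor 3s/4-\ell/2\rfloor$ is strictly less than $s$, so $\SC$ must cross it (ending Phase~1) before the \Ss{} can run out anyway. Delete the fallback and replace it with the observation above.
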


The following simple invariants can be proved inductively.

\begin{lemma}
\label{loopinv}
If $\ell \leq s/2$, then $\SB  \leq \LB \leq \SB +1$.\\
If $s/2 < \ell \leq 5s/6$, then
$\SBt  \leq \LBt \leq \SBt +1$ and\\
$\lfloor c (\SB - 1) \rfloor +1 \leq \LB \leq
\lfloor c \ \SB \rfloor +1$,
where
$1 < c = \frac{10\ell-3s}{s+ 2\ell} \leq 2$.\\
If $5s/6 <  \ell$, then
$2 \SB  \leq \LB \leq 2 \SB +1$.
\end{lemma}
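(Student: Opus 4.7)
The plan is to prove each of the three bounds by induction on the number of bins the algorithm has processed (equivalently, on the number of times one of the counters \SB, \LB, \SBt, or \LBt has been incremented). The base case is immediate since all four counters are initialized to zero, and at the starting state each of the claimed inequalities is satisfied trivially. The inductive step then reduces to examining a single bin $b$ that falls through to the Phase~1 ``else'' branch, since only that branch increments \SB, \LB, \SBt, or \LBt. Observe that within the processing of a single bin, the UseLs macro is evaluated on the same counter values each time it is referenced (nothing else modifies them), so UseLs is a single well-defined Boolean for that bin.

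For the case $\ell \leq s/2$, UseLs is simply $\LB \leq \SB$. If UseLs is true, the inductive hypothesis $\SB \leq \LB$ together with $\LB \leq \SB$ forces $\LB = \SB$, and incrementing \LB yields the new state $(\SB, \SB+1)$, which still satisfies $\SB \leq \LB \leq \SB+1$. If UseLs is false, then $\LB > \SB$ together with the inductive $\LB \leq \SB + 1$ forces $\LB = \SB + 1$, so incrementing \SB produces $(\SB+1, \SB+1)$, again satisfying the invariant. The case $5s/6 < \ell$, with UseLs $= (\LB \leq 2\SB)$, is handled by an entirely analogous case split on the truth value of UseLs, using the inductive bounds to pin down the exact values of \SB and \LB before the update and then checking that the appropriate increment restores the invariant.

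For the middle range $s/2 < \ell \leq 5s/6$, we split the induction into the two sub-phases that the algorithm itself distinguishes, according to whether $\SB < \lfloor (s+2\ell)/16\rfloor$. While $\SB$ is still below this threshold, only \SB\ and \LB\ are updated (and \SBt, \LBt\ remain zero, trivially satisfying $\SBt \leq \LBt \leq \SBt + 1$), and UseLs tests $\LB \leq c\SB$ with $c = (10\ell-3s)/(s+2\ell)$. The argument for preserving $\lfloor c(\SB-1)\rfloor + 1 \leq \LB \leq \lfloor c\SB \rfloor + 1$ mirrors case~1, but one must reason about the floor function: when UseLs is true, $\LB \leq c\SB$ together with $\LB \leq \lfloor c\SB\rfloor + 1$ and the integrality of \LB\ gives $\LB \leq \lfloor c\SB \rfloor$, so after LB++ we have $\LB \leq \lfloor c\SB \rfloor + 1$; when UseLs is false, $\LB > c\SB$ forces $\LB \geq \lfloor c\SB \rfloor + 1$, which together with the inductive upper bound pins $\LB = \lfloor c\SB \rfloor + 1 = \lfloor c((\SB+1)-1)\rfloor + 1$, matching the new lower bound after SB++. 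Once $\SB$ reaches $\lfloor (s+2\ell)/16\rfloor$, \SB\ and \LB\ are frozen and only \SBt, \LBt\ are updated, with UseLs becoming $\LBt \leq \SBt$; the invariant on these two counters is maintained by the same argument as in case~1.

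The main obstacle is the case $s/2 < \ell \leq 5s/6$: unlike the other two ranges, the target ratio $c$ is an irrational-looking fraction that need not be an integer, so maintaining a tight invariant on \LB\ in terms of \SB\ requires carefully exploiting the integrality of the counters against the floor brackets on both sides of the inequality. A secondary subtlety is to verify that the invariants on \SB/\LB\ and on \SBt/\LBt\ are genuinely independent---that is, that once the algorithm transitions into the second sub-phase, \SB\ and \LB\ remain fixed at values satisfying the first invariant while \SBt\ and \LBt\ begin from $0$ and evolve under their own invariant---so the two statements can coexist throughout Phase~1 without interference.
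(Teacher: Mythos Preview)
Your inductive approach is exactly what the paper has in mind (it merely asserts that the invariants ``can be proved inductively'' and gives no further details), and your arguments for the cases $\ell \leq s/2$ and $s/2 < \ell \leq 5s/6$ are correct.

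However, your claim that the case $5s/6 < \ell$ is ``handled by an entirely analogous case split'' does not go through for the lower bound. Suppose $2\,\SB \leq \LB \leq 2\,\SB+1$ holds and UseLs is false, so $\LB > 2\,\SB$ and hence $\LB = 2\,\SB+1$. After $\SB$ is incremented to $\SB+1$, the lower bound would demand $2(\SB+1) = 2\,\SB+2 \leq \LB = 2\,\SB+1$, which is false. Concretely, tracing the counters from $(\SB,\LB)=(0,0)$ gives $(0,1)$ and then $(1,1)$, and already $2\cdot 1 \not\leq 1$. The difference from the first case is that there the target ratio is $1$, so incrementing $\SB$ by one exactly closes a gap of one; with target ratio $2$ it overshoots. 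What the same induction \emph{does} establish is the weaker two-sided bound $2\,\SB - 1 \leq \LB \leq 2\,\SB+1$, which is precisely the $c=2$ specialization of the middle-range formula $\lfloor c(\SB-1)\rfloor + 1 \leq \LB \leq \lfloor c\,\SB\rfloor + 1$. This discrepancy is harmless for the rest of the paper, since every subsequent appeal to this case of the lemma (in Cases~2 and~3 of the upper-bound analysis) uses only the upper bound $\LB \leq 2\,\SB+1$, which your argument does correctly preserve; but the lower bound as literally stated in the lemma is false, and you should flag the misstatement rather than assert that the analogous induction succeeds.
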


We are now ready to prove the following:
\begin{theorem}
The Restricted Grid Scheduling problem
has competitive ratio at most
$$\left \{ \begin{array}{ll}
1 + \frac{1}{2 + s/\ell}
& \mbox{ if }  \ell \leq s/2,\\

1 + \frac{1}{4}
& \mbox{ if }  s/2 < \ell \leq 5s/6, \mbox{ and}\\

1 + \frac{2}{3 + 6\ell/s}
& \mbox{ if }  5s/6< \ell.
\end{array} \right .$$
\end{theorem}

\begin{proof}
We analyse the performance of \newalg as compared to the cost of an optimal reasonable packing on
an arbitrary sequence, $\sigma$, using the three different ranges
for the relationship between $s$ and $\ell$ in the lower bound. 

Both \newalg and \opt use exactly the same bins until one of
them runs out of either \Ls or \Ss. Thus,
there are four
cases to consider.

\begin{figure}[H]
\begin{center}
\begin{minipage}{15cm}
{\bf macro} Phase1done = $\left\{ \begin{array}{l@{\;\;}l}
\LC \geq \lfloor\ell/2\rfloor & \mbox{if } \ell \leq s/2\\
\SC \geq \lfloor 3s/4-\ell/2\rfloor & \mbox{if } s/2 < \ell \leq 5s/6\\
\SC \geq \lfloor s/3\rfloor & \mbox{if } 5s/6 < \ell
\end{array}\right .$ \\
 \\
 \\
{\bf macro} UseLs = $\left\{ \begin{array}{l@{\;\;}l}
\LB \leq \SB & \mbox{if } \ell \leq s/2\\
\!\!\!\begin{array}[t]{l}
{\bf if}\  (\SB < \lfloor (s+2\ell)/16\rfloor)\\
{\bf then}\ \LB \leq (10\ell-3s)\SB/(s+2\ell)\\
{\bf else}\ \LBt \leq \SBt
\end{array}
& \mbox{if } s/2 < \ell \leq 5s/6\\
\LB \leq 2\SB & \mbox{if } 5s/6 < \ell
\end{array}\right .$ \\
 \\
$\SC$ \% counts the number of \Ss packed; initially 0\\
$\LC$ \% counts the number of \Ls packed; initially 0\\
$\SB \leftarrow \LB \leftarrow \SBt \leftarrow \LBt \leftarrow 0$\\
 \\
{\bf for} each arriving bin $b$\\
\tab {\bf if} only one type of item still remains {\bf then}
use as many items as fit in $b$\\
\tab {\bf else if} $size(b)\in [S,2S-2]$ {\bf then} use 1 \Sitem\\
\tab {\bf else if} $size(b)=2S-1$ {\bf then} use 1 \Litem\\
\tab {\bf else if} $size(b)=3S-1$ {\bf then} use 1 \Litem and 1 \Sitem\\
\tab {\bf else if} only one \Sitem still remains\\
\tab\tab\;\; {\bf then} use 1 \Litem\\
\tab\tab\tab\;\;\;\;\;\; {\bf if} remaining space in $b$ is at least $S$
{\bf then} use 1 \Sitem\\
\tab {\bf else if} only two \Ss still remain and
$size(b) \in  [3S,4S-3]$\\
\tab\tab\;  {\bf then} use 1 \Litem and 1 \Sitem\\
\tab {\bf else if} ($\mbox{not Phase1done}$)\\
\tab\tab\; {\bf then} \:\% Use range determined ratio\\
\tab\tab\tab\tab\;  {\bf if} $size(b)\in [2S,3S-2]$ {\bf then}\\
\tab\tab\tab\tab\tab\tab  {\bf if} $\mbox{UseLs}$ {\bf then} use 1 \Litem\\
\tab\tab\tab\tab\tab\tab  {\bf else} use 2 \Ss\\
\tab\tab\tab\tab\;{\bf else} \:\% $size(b)\in [3S,4S-3]$\\
\tab\tab\tab\tab\tab\tab  {\bf if} $\mbox{UseLs}$ {\bf then}
 use 1 \Litem and 1 \Sitem\\
\tab\tab\tab\tab\tab\tab {\bf else} use 3 \Ss\\
\tab\tab\tab\tab\; {\bf if} $s/2 < \ell \leq5s/6$
and $\SB \geq \lfloor (s+2\ell)/16\rfloor$ {\bf then} \\
\tab\tab\tab\tab\tab\tab {\bf if} UseLs {\bf then} $\LBt+\!\!+$\\
\tab\tab\tab\tab\tab\tab {\bf else} $\SBt+\!\!+$\\
\tab\tab\tab\tab\; {\bf else}  \:\% $\ell\leq s/2$ or $\ell > 5s/6$ or $\SB < \lfloor (s+2\ell)/16\rfloor$\\
\tab\tab\tab\tab\tab\tab {\bf if} UseLs {\bf then} $\LB+\!\!+$\\
\tab\tab\tab\tab\tab\tab {\bf else} $\SB+\!\!+$\\
\tab {\bf else} \% In Phase 2\\
\tab\tab\;\;  {\bf if} $\ell \leq s/2$ {\bf then}
use as many \Ss as fit in bin $b$\\
\tab\tab\;\;  {\bf else}
use 1 \Litem\\
\tab\tab\tab\;\;\;\; {\bf if} remaining space in $b$ is at least $S$ {\bf then}
use 1 \Sitem\\
{\bf end for}
\end{minipage}
\end{center}
\caption{The algorithm \newalg}
\label{fig:newalg}
\end{figure}

\begin{enumerate}
\item \OPT runs out of \Ls at or before the point where 
\newalg runs out of anything.
\item \OPT runs out of \Ss at or before the point where 
\newalg runs out of anything.
\item \newalg runs out of \Ls before \OPT runs out of anything.
\item \newalg runs out of \Ss before \OPT runs out of anything.
\end{enumerate}

We examine each of these four cases and, within each case, we
consider three different ranges for the relationship between
$s$ and $\ell$.

Consider an  arbitrary sequence of bins $\sigma$ in which
\newalg{}  packs $s$ items of size $S$ and
$\ell$ items of size $L$.
Consider the first bin $b'$ after which
either \OPT{} or \newalg{} had only one type of item remaining.
Up to and including bin $b'$, both \OPT and \newalg have
unpacked items of both sizes.
Let $\CL$ and \CS  denote the number of \Ls and \Ss,
respectively, that \newalg has packed up to and including bin $b'$.
Let $\Spr$ and $\SBtpr$ be the values of $\SB$ and $\SBt$ immediately after bin $b'$,
so $\Spr +\SBtpr$ denotes the number of bins at or before $b'$
where \newalg had a choice and packed two \Ss instead of one of size $L$.
Similarly, 
$\Lpr +\LBtpr$ denotes the number in which \newalg
packed one \Litem instead of two items of size $S$, where
$\Lpr$ and $\LBtpr$ are the values of $\LB$ and $\LBt$ immediately after bin $b'$.
Let $\sigma'$ denote the portion
of $\sigma$ after bin $b'$.

Both algorithms use all bins up to and including bin $b'$,
since all bins have size at least $S$.
Let $X$ be the total cost of these bins.
Then $X \geq (\CS) S + (\CL) L$ and
\newalg($\sigma,s,\ell) = X +$ \newalg($\sigma', s-\CS, \ell-\CL$).

\paragraph{Case 1:
\OPT has no
\Ls after $b'$}
In this case,
Lemma \ref{lemma-reasonable} implies that,
after bin $b'$,
OPT has $(s-\CS)+2(\ell-\CL)$
items of size $S$ remaining,
so $OPT(\sigma,s,\ell) = X + OPT(\sigma',(s-\CS)+2(\ell-\CL),0)$.

Since \newalg is reasonable, Lemma~\ref{OptNoLarge}
implies that 
$\mbox{\newalg}(\sigma', s-\CS, \ell-\CL)  \leq 
\OPT(\sigma',(s-\CS)+2(\ell-\CL),0)+(\ell-\CL)(2S-3)$,
so 
$$\mbox{\newalg}(\sigma,s,\ell)  \leq \OPT(\sigma,s,\ell) +(\ell-\CL)(2S-3).$$
By Proposition~\ref{optspacelower},
$\OPT(\sigma,s,\ell) \geq sS+\ell L
\geq sL/2+\ell L = (s+2\ell)L/2$.
When computing the competitive ratio, we will choose the additive constant
to be at least $L$, so we subtract $L$ from \newalg's cost in the ratio.
Thus,
\begin{eqnarray*}
\frac{\newalg(\sigma,s,\ell)-L}{\OPT(\sigma,s,\ell)} & \leq &
\frac{\OPT(\sigma,s,\ell)+(\ell-\CL)(2S-3)-L}{\OPT(\sigma,s,\ell)}\\
& \leq & 1 + \frac{(\ell-\CL-1)(L-2)}{(s+2\ell)L/2}\\
& \leq &
1+\frac{2(\ell-\CL-1)}{s+2\ell}.
\end{eqnarray*}
It remains to bound $\ell-\CL- 1$ in each of the three ranges for the ratio
$s/\ell$.
We use the fact that, immediately after bin $b'$,
OPT has run out of \Ls, so
at least $\ell$ bins of size at least $L$ are in $\sigma$
up to and including bin $b'$.

First, consider the case when $\ell \leq s/2$.
If $\CL < \lfloor \ell/2 \rfloor$, then Phase 1 was not completed
when bin $b'$ arrived. 
Therefore,
each time a bin of size
at least $L$ arrives up to and including bin $b'$,
\newalg either packs an \Litem in it and, hence,
increments \LC, or it increments \SB.
Hence, $\CL + \Spr \geq \ell$.
By Lemma \ref{loopinv}, $\Lpr \ge \Spr$.
Since $\CL \geq \Lpr$, by Observation~\ref{binsvscounts},
it follows that $\CL \geq \ell/2 \geq \lfloor \ell/2 \rfloor$.
This is a contradiction.
Thus,  $\CL \geq \lfloor \ell/2 \rfloor$ and 
\begin{eqnarray*}
\frac{\newalg(\sigma,s,\ell)-L}{\OPT(\sigma,s,\ell)} 
& \leq & 1+\frac{2(\ell-\CL-1) }{s+2\ell} \\
& \leq & 1+\frac{\ell}{s+2\ell} =
1+\frac{1}{2+s/\ell}.
\end{eqnarray*}

So, suppose that $s/2 < \ell$. 
During Phase 1, each time a bin of size
at least $L$ arrives,  \newalg either packs an \Litem in it and, hence,
increments \LC, or it increments one of \SB or \SBt.
During Phase 2, \newalg  packs an \Litem in each such bin until
it runs out of \Ls.
Therefore,  $\CL + \Spr +\SBtpr \geq \ell$.

If $\ell \leq 5s/6$, then Lemma \ref{Sbininv2} implies that
$\ell-\CL\leq \Spr +\SBtpr \leq 3s/8-\ell/4+1$ and
\begin{eqnarray*}
\frac{\newalg(\sigma,s,\ell)-L}{\OPT(\sigma,s,\ell)} 
& \leq & 1+\frac{2(\ell-\CL-1)}{s+2\ell} 
~ \leq ~ 1+\frac{2(3s/8-\ell/4)}{s+2\ell} \\
& = & 1+\frac{s/4+\ell/2 + s/2-\ell}{s+2\ell} 
~ < ~ 1 + \frac{s/4 + \ell/2}{s+2\ell}
=  \frac{5}{4}.
\end{eqnarray*}

Similarly,  if $s< 6\ell/5$, then 
$\SBtpr = 0$, by Observation~\ref{binsvscounts}, so
$\ell-\CL\leq \Spr \leq s/6+1$,
by Lemma \ref{Sbininv3}, and
\begin{eqnarray*}
\frac{\newalg(\sigma,s,\ell)-L}{\OPT(\sigma,s,\ell)} 
& \leq & 1+\frac{2(\ell-\CL-1) }{s+2\ell} \\
& \leq & 1+ \frac{s/3}{s+2\ell} 
~ < ~ 1+\frac{2}{3+6\ell/s}. 
\end{eqnarray*}

\paragraph{Case 2:
\OPT has  no \Ss after $b'$}
In this case, Lemma \ref{lemma-reasonable} implies that,
after bin $b'$,
OPT has $k= (s-\CS)/2+(\ell-\CL)$
items of size $L$ remaining,
so $OPT(\sigma,s,\ell) = X + OPT(\sigma',0,k)$.

Since  \newalg\ is reasonable, Lemma~\ref{OptNoSmall} implies that
$$\begin{array}{l}
\newalg(\sigma,s,\ell) = X+\newalg(\sigma',s-\CS,\ell-\CL)\\
\leq X+OPT(\sigma',0,k)+ ((s-\CS)+(\ell-\CL)-k-1)L + M\\
= OPT(\sigma,s,\ell) +  (s-\CS)L/2 + M-L.
\end{array}$$
By Proposition~\ref{optspacelower},
$OPT(\sigma,s,\ell) \geq sS+\ell L\geq sL/2+\ell L$.

When computing the competitive ratio, we will choose the additive constant
to be at least $M$, so we subtract $M$ from \newalg's cost in the ratio.
Thus,
\begin{eqnarray*}
\frac{\newalg(\sigma,s,\ell)-M}{\OPT(\sigma,s,\ell)} & \leq & 
\frac{\OPT(\sigma,s,\ell) + (s-\CS)L/2 -L}
{\OPT(\sigma,s,\ell)} \\
& \leq & 1 + \frac{((s-\CS)/2-1)L}{sL/2+\ell L} \\
 & = & 1 + \frac{s-\CS-2}{s +2\ell }.
\end{eqnarray*}
It remains to bound $s-\CS-2$ in each of the three ranges for the ratio $s/\ell$.

First, suppose that $\ell\leq s/2$.
Immediately after bin $b'$,
OPT has packed at most $2\Lpr$ additional \Ss than
\newalg, so
$s \leq \CS +2\Lpr$.
When Phase 1 is finished,
$\LC = \lfloor \ell/2\rfloor$, since $\LC$ either changes by 0 or 1 each 
iteration.
After Phase 1, $\LB$ does not change.
By Observation~\ref{binsvscounts}, $\LB \leq \LC$.
Hence $\Lpr \leq \lfloor \ell/2\rfloor \leq \ell/2$.
Thus,
\begin{eqnarray*}
\frac{\newalg(\sigma,s,\ell)-M}{\OPT(\sigma,s,\ell)}
& \leq & 1 + \frac{s-\CS-2}{s +2\ell }
< 1 + \frac{2\Lpr}{s+2\ell} \\
& \leq & 1 + \frac{\ell}{s+2\ell} = 
1+\frac{1}{2+s/\ell} .
\end{eqnarray*}

Next, consider the case where $s/2 < \ell \leq 5s/6$.
If $\CS \geq 3s/4 - \ell/2 - 2$, then
$s-\CS -2 \leq 
s -3s/4 + \ell/2 =
(s+2\ell)/4$.
Otherwise, $\CS \leq 3s/4-\ell/2 -2 < \lfloor 3s/4-\ell/2  \rfloor$, so
Phase 1 is not completed 
when bin $b'$ is packed.
Therefore,
the number of bins in which \newalg\ packed one \Litem,
but OPT  packed two \Ss is at most  $\Lpr+\LBtpr$.
Since  \OPT has packed all of the \Ss,
$s \leq \CS +2\Lpr+ 2\LBtpr$.
By Observation~\ref{binsvscounts} and Lemma~\ref{loopinv},
$2\Spr + 2\SBtpr \leq \CS$,
$\Lpr \leq \lfloor (10 \ell -3s)\Spr/(s+2\ell) \rfloor +1$, and
$\LBtpr \leq \SBtpr +1$.
From the algorithm, $\Spr \leq \lfloor (s+2\ell)/16 \rfloor$,
since $\SB$ increases by at most one each iteration until it reaches
$\lfloor (s+2\ell)/16 \rfloor$ and does not increase thereafter.
Hence,
$$\begin{array}{l}
s -\CS -2 \leq 2(\Lpr + \LBtpr -1) \\ \leq  2(1+ (10\ell -3s)\Spr/(s+2\ell) + 1+ \SBtpr -1) \\
 =   2(\Spr + \SBtpr + (8\ell - 4s)\Spr/(s+2\ell)+1) \\
 \leq  \CS + 2(8\ell - 4s)/16 + 2 \\
 \leq  3s/4 - \ell/2 + \ell - s/2 = s/4 + \ell/2 = (s+2\ell)/4.
\end{array}$$
Therefore,
\begin{eqnarray*}
\frac{\newalg(\sigma,s,\ell)-M}{\OPT(\sigma,s,\ell)} \leq  1 + \frac{s-\CS-2}{s+2\ell} 
\leq 1+\frac{(s+2\ell)/4}{s+2\ell} = \frac{5}{4}.
\end{eqnarray*}

When $5s/6 < \ell$, we show that $\CS \geq \lfloor s/3 \rfloor$.
Suppose not. Then Phase 1 is not completed 
when bin $b'$ is packed, so
the number of bins in which \newalg\ packed one \Litem,
but OPT  packed two \Ss is at most  \Lpr.
Since  \OPT has packed all of the \Ss,
$s \leq \CS +2\Lpr$.
By Lemma~\ref{loopinv} and Observation~\ref{binsvscounts}, 
$\Lpr \leq 2\Spr +1 \leq  \CS +1$.
Hence $s \leq 3 \CS + 2$ and $\CS \geq (s-2)/3$.
Since \CS is an integer and $(s-2)/3 > (s-3)/3 \geq \lfloor s/3 \rfloor-1$
it follows that
$\CS \geq \lfloor s/3 \rfloor$, which is a contradiction.

Thus, $\CS \geq \lfloor s/3 \rfloor$, 
so $s-\CS -2 \leq \lceil 2s/3\rceil-2 < 2s/3$ and
\begin{eqnarray*}
\frac{\newalg(\sigma,s,\ell)-M}{\OPT(\sigma,s,\ell)}
 & \leq & 1 + \frac{s-\CS-2}{s+2\ell} \\
& < & 1+\frac{2s/3}{s+2\ell}\\
& = & 1+\frac{2}{3 +6\ell/s}.
\end{eqnarray*}

\paragraph{Case 3:
 \newalg runs out of \Ls before \OPT runs out of anything}
Suppose that, in this instance,
\newalg packs its last item of size $L$ in bin $b'$,
but, after  bin $b'$,  \OPT has $i>0$ items of size $S$
and $j>0$ items of size $L$ that are unpacked.
Then, by Lemma \ref{lemma-reasonable},
\newalg has $s -\CS = i + 2j$ unpacked items of size $S$.

Note that $j\leq \Lpr+\LBtpr$, since the only bins where \newalg had a 
choice and packed one \Litem where \OPT possibly did not are the bins
contributing to the values of $\LB$ and $\LBt$. 
By Lemma~\ref{AlgNoLarge},
$\newalg(\sigma,s,\ell) 
= X + \newalg(\sigma',i+2j,0) \leq X+ \OPT(\sigma',i,j) + (j-1)L +M$
= $\OPT(\sigma,s,\ell) + (j-1)L +M$.
By Proposition~\ref{optspacelower},
$\OPT(\sigma,s,\ell)
\geq sS+\ell L\geq sL/2+\ell L$.
In computing the competitive ratio, we will let the additive constant
be $M$, so we will subtract this value from \newalg's
cost in the ratio.

Thus,
\begin{eqnarray*}
\frac{\newalg(\sigma,s,\ell)-M}{\OPT(\sigma,s,\ell)} & \leq & 
\frac{\OPT(\sigma,s,\ell) + (j-1)L}{\OPT(\sigma,s,\ell) } \\
&= & 1+ \frac{(j-1)L}{\OPT(\sigma,s,\ell) } \\
 & \leq & 1 + \frac{(j-1)L}{sL/2+\ell L } \\
 & = & 1+\frac{2j-2}{s+2\ell} 
\end{eqnarray*}

We now bound $2j$ in each of the three cases of the algorithm.

Suppose $\ell\leq s/2$. 
Phase 1 has completed, since all \Ls have been packed by \newalg.
\Lpr is at most the value of \LC when Phase 1 completed,
which is $\lfloor\ell/2\rfloor$.
Thus, $2j\leq 2\Lpr\leq
2\lfloor\ell/2\rfloor \leq \ell$.
Hence, 
\begin{eqnarray*}
\frac{\newalg(\sigma,s,\ell)-M}{\OPT(\sigma,s,\ell)} & \leq &
1 + \frac{2j}{s+2\ell}\\
& \leq & 1 + \frac{\ell}{s+2\ell}\\
& \leq & 1 + \frac{1}{2+s/\ell}.
\end{eqnarray*}

Next, suppose that $6\ell/5 \leq s<2\ell$.
If $\CS \geq 3s/4 - \ell/2 -2$, then
$2j-2 = s- \CS -i -2 < s/4 + \ell/2= (s + 2\ell)/4$.
Otherwise, $\CS < 3s/4 - \ell/2 -2 < \lfloor 3s/4-\ell/2\rfloor$,
so Phase 1 is not done when bin $b'$ is packed.
Therefore, since $j$ is at most equal to the number of bins in which \newalg\ packed one \Litem, but \opt\ did not,
$j\leq \Lpr+ \LBtpr$.
By Observation~\ref{binsvscounts} and Lemma~\ref{loopinv},
$2\Spr + 2\SBtpr \leq \CS$,
$\Lpr \leq \lfloor (10 \ell -3s)\Spr/(s+2\ell) \rfloor +1$, and
$\LBtpr \leq \SBtpr +1$.
From the algorithm, $\Spr \leq \lfloor (s+2\ell)/16 \rfloor$,
since $\Spr$ increases by at most one each iteration until it reaches
$\lfloor (s+2\ell)/16 \rfloor$ and does not increase thereafter.
Hence $2j -2 \leq 2(\Lpr + \LBtpr -1) \leq 2(1+ (10\ell -3s)\Spr/(s+2\ell) + 1+ \SBtpr -1)
= 2(\Spr + \SBtpr + (8\ell - 4s)\Spr/(s+2\ell)+1)
\leq \CS + 2(8\ell - 4s)/16 + 2
< 3s/4 - \ell/2 -2 + \ell - s/2 + 2 = (s+2\ell)/4$.
Hence,
\begin{eqnarray*}
\frac{\newalg(\sigma,s,\ell)-M}{\OPT(\sigma,s,\ell)} & \leq &
1 + \frac{2j-2}{s+2\ell}\\
& \leq & 1 + \frac{(s+2\ell)/4}{s+2\ell} = \frac{5}{4}.
\end{eqnarray*}

Now, suppose that $6\ell/5 >s$.
If Phase 1 is done, then $\CS \geq \lfloor s/3 \rfloor > s/3 - 1$,
so $2j=s-\CS-i\leq s - s/3 +1 - i \leq 2s/3$.
Otherwise, $\CS < \lfloor s/3 \rfloor$.
Then, since $\CS \geq 2\Spr$, it follows from Lemma \ref{loopinv}
that $j\leq \Lpr \leq 2\Spr + 1 \leq \CS + 1 \leq \lfloor s/3 \rfloor \leq s/3$. 
Hence,
\begin{eqnarray*}
\frac{\newalg(\sigma,s,\ell)-M}{\OPT(\sigma,s,\ell)} & \leq &
1 + \frac{2j}{s+2\ell}\\
& \leq & 1 + \frac{2s/3}{s+2\ell}\\
& < & 1 + \frac{2}{3+6\ell/s}.
\end{eqnarray*}

\paragraph{Case 4: \newalg runs out of \Ss before \OPT  runs out of anything}
Suppose that, in this instance,
\newalg packs its last item of size $S$ in bin $b'$,
but, after  bin $b'$,  \OPT has $i>0$ items of size $S$
and $j>0$ items of size $L$ that are unpacked.
Then,
\newalg has $\ell -\CL$ unpacked items of size $L$,
and,  by Lemma \ref{lemma-reasonable},
$2(\ell -\CL) = i + 2j$.
Note that, in this case, $\ell \geq i/2 + j \geq \frac{1}{2} + 1$, so $\ell \geq 2$.

Since \newalg is reasonable, Lemma~\ref{AlgNoSmall} implies that
\begin{eqnarray*}
\newalg(\sigma,s,\ell) & = & X+R(\sigma',0,\ell -\CL) \\
& \leq & X + \OPT(\sigma',i,j)+\min\{ 0,j-i\} S  +(\ell-\CL-j)M \\
& = & \OPT(\sigma,s,\ell)+\min\{ 0,j-i\} S  +(i/2)M.
\end{eqnarray*}
By Proposition~\ref{optspacelower},
$\OPT(\sigma,s,\ell) \geq sS+\ell L\geq sL/2+\ell L$.
Thus,
\begin{eqnarray*}
\frac{\newalg(\sigma,s,\ell)}{\OPT(\sigma,s,\ell)}
&  \leq & 1 + \frac{(i/2)M+\min\{0,j-i\}S}{sL/2+\ell L}\\
& = & 1 + \frac{i(2L-1) + \min\{0,j-i\}(L+1)}{sL+2\ell L}\\
& = & 1 + \frac{ L(i + \min\{i,j\}) - i + \min\{0,j-i\}}{L(s+ 2\ell)}\\
& \leq & 1+\frac{i + \min\{i,j\}}{s+2\ell}.
\end{eqnarray*}

First suppose $\ell \leq s/2$.  If Phase 1 has not ended,
then the only bins where \newalg had a choice and packed two more \Ss than \OPT did
are the bins contributing to the value of $\SB$, so
$i \leq 2\Spr$.  By Lemma \ref{loopinv},
$\Spr \leq \Lpr \leq \CL$.

If $j\geq i$, then $2(\ell-\CL)=i+2j\geq 3i$,
so $i\leq 2(\ell-\CL)/3$.
In this case, if $\CL \geq \ell/4$, then
$i\leq 2(\ell-\CL)/3 \leq  \ell/2$.
If not, since $\ell \geq 2$,
$\CL<\ell/4 \leq (\ell -1)/2 \leq \lfloor \ell/2 \rfloor$, so Phase 1 has not ended
and $i \leq 2\Spr \leq 2\Lpr \leq 2\CL < \ell/2$.
Therefore, 
$$\frac{\newalg(\sigma,s,\ell)}{\OPT(\sigma,s,\ell)}  \leq 
1 + \frac{2i}{s+2\ell} \leq 
1 + \frac{\ell}{s+2\ell}
= 1 + \frac{1}{2+s/\ell}.$$

Otherwise,
$j<i$.
If $\CL \geq \lfloor \ell/2 \rfloor$,
then $i+j=2(\ell -\CL)-j\leq 2(\ell -\CL)-1 
\leq 2\lceil \ell/2 \rceil -1 \leq \ell$.
If not, $\CL < \lfloor \ell/2 \rfloor$, so Phase 1 has not ended,
$i \leq 2\Spr \leq 2\Lpr \leq 2\CL$, and $i+j =  \ell - \CL +i/2 \leq \ell$.
Hence
$$\frac{\newalg(\sigma,s,\ell)}{\OPT(\sigma,s,\ell)} \leq 
1 + \frac{i+j}{s+2\ell}
\leq 1 + \frac{\ell}{s+2\ell}
\leq 1 + \frac{1}{2+s/\ell}.$$

So, suppose that $\ell > s/2$.
Then $i \leq 2(\Spr + \SBtpr)$, because
the only bins where \newalg had a choice and packed two more \Ss than \OPT did
are the bins contributing to the value of $\SB$ or $\SBt$.
Since \newalg has run out of \Ss, Phase 1 has ended.

Now suppose that $s/2 < \ell \leq 5s/6$.
If $\Spr + \SBtpr \leq (s+2\ell)/16$,
then $i + \min\{i,j\} \leq 2i \leq 4(\Spr + \SBtpr) \leq  (s+2\ell)/4$, so
$$\frac{\newalg(\sigma,s,\ell)}{\OPT(\sigma,s,\ell)}  \leq 
1 + \frac{s+2\ell}{4(s+2\ell)} = \frac{5}{4}.$$

Otherwise, $\Spr  = \lfloor (s+2\ell)/16\rfloor$.
By Lemma~\ref{loopinv}, $\Lpr \geq \lfloor c(\Spr -1) \rfloor + 1$,
where $1 < c = \frac{10\ell -3s}{s+2\ell} \leq 2$ and
$\LBtpr \geq \SBtpr$. Since $\Lpr + \LBtpr \leq \CL$, it follows that
\begin{eqnarray*}
i + \min\{i,j\} \leq i+ j & = & \frac{i}{2} + \ell - \CL\\
& \leq & \Spr + \SBtpr + \ell - \Lpr - \LBtpr\\
& \leq & \ell + \Spr - \lfloor c(\Spr -1) \rfloor - 1\\
& < & \ell + \Spr - c(\Spr -1)\\
& = & \ell + (1-c) \lfloor (s+2\ell)/16\rfloor + c\\
& \leq & \ell + (1-c) ((s+2\ell)/16  - 1) + c\\
& = & \ell + \left( 1-\frac{10\ell - 3s}{s + 2\ell}\right )\frac{s+2\ell}{16} + 2c -1\\
& = & (4s + 8\ell)/16 + 2c-1 \mbox{ and}\\
\frac{\newalg(\sigma,s,\ell)-3}{\OPT(\sigma,s,\ell)}  & \leq &
1 + \frac{ (4s + 8\ell)/16}{s+2\ell}
= \frac{5}{4}.
\end{eqnarray*}

Finally, suppose that $5s/6 < \ell$.
When Phase 1 finished,
$2 \SB \leq \SC \leq \lfloor s/3 \rfloor+ 2$.
Thereafter, $\SB$ does not change. Hence,
$i\leq 2\Spr \leq s/3 + 2$ and
$$\frac{\newalg(\sigma,s,\ell)-4}{\OPT(\sigma,s,\ell)}  \leq 
1 + \frac{2i -4}{s+2\ell}
\leq  1 + \frac{2s/3}{s+2\ell}
= 1 + \frac{2}{3+6\ell/s}.$$
\end{proof}

\section{A Relaxed Version of Grid Scheduling}
\label{sec:relax}
In the definition of the
Grid Scheduling problem, we restricted
the online algorithms so  that
after each bin is packed, no unpacked item fits into its remaining space.
In this section, we consider a relaxed version of Grid Scheduling without this restriction.
Specifically, in this version, a packing can be feasible even if there is a bin which has enough empty space
to contain one of the items packed in a later bin.
We show that the restriction does not make the problem easier.
This may be useful for narrowing the gap between the upper
and lower bounds for the Grid Scheduling problem.
The results below imply that the lower bounds and the properties of optimal packings, which were
proved in Sections \ref{sec:lower} and \ref{sec:properties}, also hold for 
the relaxed version of Restricted Grid Scheduling.

We begin with a lemma that concerns packings of subsets of the items.
If a bin, $b$, in a  packing contains only one item, $i$,
and there is at least one item in a later bin which is no larger than $b$,
removing the item $i$ from the sequence 
might require
substantial repacking to produce a 
feasible
packing.
However,
this lemma shows that there is a feasible repacking that uses 
a (not necessarily proper) subset of the bins used in the original packing.

\begin{lemma}
For any  packing $p$ of a finite set of items $I$ into a sequence of bins $\sigma$ and any item $i \in I$,  there is a  packing $p'$ of $I - \{i\}$ into $\sigma$ that uses a subset of  the bins used by $p$.
If $p$ was produced by an online algorithm, then there exists
an online algorithm that produces $p'$.
\label{validsubset}
\end{lemma}

\begin{proof}
If $|I|=1 $, the claim holds since, after removing one item, no bins are used.
Let $p$ be any packing of a set $I$, with at least two items, into a sequence of bins $\sigma = \langle b_1, b_2, \ldots, b_m\rangle$.
Assume the claim is true for any smaller set of items.

Let $b_k$ be the bin in which $p$ packs item $i$ and let
$J \subseteq I - \{i\}$
be the set of items $p$ packs into later bins.
If $p$ packs more than one item in  bin $b_k$ or all items in $J$ are larger than size($b_k$),
then the packing $p'$ of $I-\{i\}$ that packs every item into the same bins as $p$ does is feasible.

Otherwise, there is an item $j \in J$ that is no larger than $b_k$.
Let $q$ be the  packing of $J$ into $\sigma' = \langle b_{k+1}, \ldots, b_m\rangle$
that packs each item of $J$ into the same bin as $p$ does.
Since the claim holds for any smaller set of items, there is a  packing $q'$ of $J-\{j\}$ that uses a subset of 
the bins used by $q$.
Let $p'$ be the packing of $I-\{i\}$ that
packs the items of
$I-(J \cup \{i\})$
into bins  $b_1,\ldots,b_{k-1}$ the same  as $p$ does,
packs item $j$ into bin $b_k$, and
packs the items of $J - \{j\}$  into bins  $b_{k+1},\ldots,b_m$ the same  as $q'$ does.

Now suppose that the packing $p$ was produced online by an
algorithm \alg. To produce $p'$, an online algorithm $\algprime$, given
the set of items $I-\{i\}$, could simulate \alg on the set of items $I-\{i\}$ together with an imaginary copy of item $i$, until it is about to pack $i$ into bin $b_k$.
If \alg will pack more than one item into $b_k$ or the items that \alg will pack into later bins are all larger than $size(b_k)$, then $\algprime$ continues to simulate \alg
as if it had packed item $i$ into bin $b_k$.
Otherwise, let $j$ be an unpacked item that is no larger than $b_k$
and let $J'$ be the set of the other unpacked items.
Since \alg produces a feasible packing $q$ of $J' \cup \{j\}$ into $\sigma'$,
the induction hypothesis implies that there is 
an online algorithm $\algprimeprime$ that produces the feasible packing $q'$.
Then $\algprime$ packs item $j$ into bin $b_k$ and simulates $\algprimeprime$ on the set of remaining items $J'$.

Hence, the claim is true for $I$. By induction, the claim is true for all finite sets of items $I$.
\end{proof}

A bin in a 
packing
is {\em wasteful} 
if its empty space is at least as large as some unpacked item
or some item packed in a bin that occurs later in the sequence.
In other words, when this bin was packed, there was an extra
item that could have been added to it, but was not.
A
packing is {\em thrifty} if it contains no wasteful bins.
The original version of the problem has the restriction that all feasible packings must be thrifty.

The following result shows 
that we may ignore algorithms which do not produce thrifty packings.

\begin{lemma}
\label{thrift-lem}
For any packing $p$ of a finite set of items $I$  into a sequence of bins $\sigma$,
there is a thrifty packing $p'$ of $I$ into $\sigma$ using a subset of the bins used by $p$.
If $p$ was produced by an online algorithm, then there exists
an online algorithm that produces $p'$.
\end{lemma}

\begin{proof}
Any packing of a set with at most one item is also thrifty.
Let $I$ be a finite set with at least two items and
assume the claim is true for any smaller set of items.

Let $p$ be any packing of $I$ into a sequence of bins $\sigma = \langle b_1, b_2, \ldots, b_m\rangle$. 
Suppose $p$ is not thrifty. Then it has at least one wasteful bin.
Let $b_k$ be the first wasteful bin in $p$, let $e$ be the amount of empty space in $b_k$,  and let 
$J \subseteq I$
be the set of items $p$ packs into later bins.
Since $p$ is feasible, $b_k$ is nonempty, so
$J \subsetneq I$.
Since $b_k$ is wasteful, there is some item $i \in J$ of size at most $e$.

Let $q$ be the packing of $J$ into $\sigma' = \langle b_{k+1}, \ldots, b_m\rangle$
that packs each item of $J$ into the same bin as $p$ does.
By Lemma \ref{validsubset}, there is a packing $q'$ of $J-\{i\}$ into 
 $\sigma'$ that uses a subset of the bins used by $q$.
Let  $b$ be a bin of size $e$, let $\sigma'' = \langle b, b_{k+1}, \ldots, b_m\rangle$,
and let $q''$ be the packing of $J$ into $\sigma''$ that packs item $i$ into bin $b$
and packs each item of $J-\{i\}$ into the same bin that $q'$ does.

Since $J \subsetneq I$, it follows by the induction hypothesis that
there is a thrifty packing $p''$ of $J$ into $\sigma''$ using a subset of
the bins used by $q''$.
Note that $p''$ is feasible and $i\in J$, so $p''$ uses bin $b$.
Moreover, every item $p''$ packs in later bins is smaller than the empty space in $b$.

Let $p'$ be the packing of $I$ into $\sigma$
that packs each item in $I-J$ into the same bin as $p$,
adds to bin $b_k$ the items packed by $p''$ into bin $b$, and
packs bins $b_{k+1}, \ldots, b_m$ as $p''$ packed them.
Since the empty space $p'$ leaves in bin $b_k$ is equal to the empty
space $p''$ leaves in bin $b$, bin $b_k$ is not wasteful in $p'$.
Bins $b_1,\ldots,b_{k-1}$ are not wasteful in $p'$, since they are not wasteful in $p$.
Bins $b_{k+1}, \ldots, b_m$ are not wasteful in $p'$, since they are not wasteful in $p''$.
Therefore $p'$ is a thrifty packing that uses a subset of the bins used by $p$.
By induction, the claim is true for all finite sets of items.

Suppose $p$ was produced by an online algorithm \alg. Since $q$ is also produced by \alg, Lemma  \ref{validsubset} implies that $q'$ is produced by an online algorithm $\algprime$. Then $q''$ can be produced by an online algorithm that packs item $i$ in the first bin that arrives and thereafter simulates algorithm $\algprime$. By the induction hypothesis, it follows that $p''$ can be produced by
an online algorithm $\algprimeprime$.  Finally, an online algorithm that produces $p'$ proceeds by packing each item using \alg until it creates a wasteful bin, then simulates $p''$ with the remaining items, treating the empty space in the wasteful bin as if it were the first bin to arrive, followed by the rest of the sequence.
\end{proof}

The previous lemma shows that restricting attention to thrifty packings is also sufficient for 
the lower bound.
In other words, proving a lower bound on the competitive ratio for the class of thrifty packings
then immediately implies the same lower bound for all packings.
It eliminates the possibility that there is an online packing that is not thrifty and more 
efficient than any online thrifty packing. The following corollary shows that
there is also an optimal (offline) packing which is thrifty.

\begin{corollary}
\label{thrift-cor}
For any optimal packing of a set of items into a sequence of bins, there is an optimal thrifty packing
of those items into that sequence using the same set of bins.
\end{corollary}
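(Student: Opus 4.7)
The plan is to derive the corollary as an almost immediate consequence of Lemma \ref{thrift-lem}. Starting from an optimal (hence valid) packing $p$ of a set of items $I$ into a sequence of bins $\sigma$, I would invoke Lemma \ref{thrift-lem} to obtain a thrifty packing $q$ of $I$ into $\sigma$ that uses a subset of the bins used by $p$.

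Next, I would compare costs. The cost of $q$ is the sum of sizes of the bins it uses, which is at most the corresponding sum for $p$ because the set of bins used by $q$ is contained in that used by $p$ and all bin sizes are positive integers. Since $p$ is optimal, the cost of $q$ cannot be strictly smaller, so equality must hold; equivalently, $q$ uses exactly the same set of bins as $p$. Because $q$ is thrifty and achieves the optimal cost, it is an optimal thrifty packing using the same set of bins as $p$, which is exactly what the corollary asserts.

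There is essentially no obstacle here — the only subtle point to state cleanly is that thrifty packings are valid (explicitly noted just before Lemma \ref{thrift-lem}) and that strictly shrinking the set of used bins strictly decreases the cost, so the subset containment forced by Lemma \ref{thrift-lem} must in fact be equality. The proof is therefore a short two- or three-sentence argument rather than a full induction or case analysis.
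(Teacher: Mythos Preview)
Your proposal is correct and matches the paper's proof essentially line for line: apply Lemma~\ref{thrift-lem} to the optimal packing $p$ to get a thrifty $q$ using a subset of the bins, note $cost(q) \leq cost(p)$ with equality forced by optimality of $p$, and conclude that $q$ uses exactly the same bins. Your observation that thrifty packings are valid (so $q$ is eligible for the optimality comparison) is the one subtlety, and the paper relies on it implicitly in the same way.
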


\begin{proof}
Let $p$ be an optimal packing of a set of items $I$ into a sequence of bins $\sigma$.
Since $p$ is feasible,
Lemma \ref{thrift-lem} implies that there is a thrifty packing packing $q$
of $I$ into $\sigma$ using a subset of the bins used by $p$.
Thus $cost(q) \leq cost(p)$. But $p$ is optimal, so $cost(q) = cost(p)$.
Hence $q$ is optimal and uses the same set of bins that $p$ uses.
\end{proof}

\section{Conclusions and Open Problems}

Note that the upper bound uses the additive constant in the
definition of the competitive ratio, assuming that
$M$ is a constant. As mentioned earlier, this
bound on $M$ is necessary for the general Grid Scheduling problem
or no algorithm is competitive.
One would like to normalize and assume that $M=1$, since the
lower bound holds with this assumption. The upper bound proof
fails in Case 3 with this assumption because Lemma~\ref{AlgNoLarge}
fails. The proof uses the fact that the empty space in a bin containing
exactly one item of size $S$ is at most $S-1=L-S$ (if that item
is not the last). Thus, it implicitly assumes that there is no bin 
size between
$L$ and $2S$. Without this assumption, the term $(k-s'-\ell'-1)L$
becomes arbitrarily close to $(k-s'-\ell'-1)2S$, increasing the
bounds on the competitive ratio by a factor of $2S/L$.
To obtain a tight upper bound without this assumption,
one would need to change
the algorithm, choosing other values for the ratios defined
by the macros in \newalg. These values are defined using the 
values of $k$
computed in the lower bound proof in Theorem~\ref{thm:lower}. 
We expect that
simply modifying the constants in the macros would give an
algorithm with the same upper bound, but an even more complicated
analysis.

We have shown that varying the proportion of \Ss to \Ls does not
lead to a larger competitive ratio if the maximum bin size is at most $4S-3$.
This may also be
the case
for an arbitrary maximum
bin size, but there are complications.
First,  the invariant $s+2\ell = s'+2\ell'$, which holds until either \opt or the
reasonable packing runs out of one item type, is very heavily used.
For example, in Lemma~\ref{OptNoSmall}
it is used to show that the reasonable packing runs out of large items before \opt does.
With larger bins, the invariant
no longer holds in some cases.
For example, if some bin has size $S\cdot L$, 
one packing could contain $L$ items of size $S$, while the other packing contains $S$ 
items of size $L$.
In addition, it  may be an advantage to have mixed bins which contain
more than one \Sitem and more than one \Litem.
So, when bins
can be large, one needs to consider how to maintain a good
ratio of \Ls to \Ss, as they are packed. We conjecture that
maintaining the ratios specified in \newalg is sufficient.

One could also consider the Grid Scheduling problem for two
bin sizes, $S$ and $L \neq 2S-1$.
For example, when $L$ is a multiple of $S$,
then the algorithm of Example 2,
which always uses as many items of size $L$
as possible, is optimal and, hence, has competitive ratio 1.
It would be interesting to see if changing the ratio of $S$ to
$L$ could improve the lower bound.
We conjecture that it does not.

For the general problem, where there could be more than
two  sizes of items,
we would like to
close the gap between our lower bound
and the upper bound of $\frac{13}{7}$ in \cite{BF10}.

Finally, it would be interesting to consider the competitive ratio
of randomized algorithms for the Grid Scheduling problem
or Restricted Grid Scheduling problem against an oblivious adversary. 

\section{Acknowledgements}
Joan Boyar was supported in part by the Villum Foundation and the
Danish Council for Independent Research,
Natural Sciences (FNU).
Part of this work was carried out while she was visiting
 the University of Waterloo and the University of Toronto.

Faith Ellen was supported in part by the VELUX Foundation
and the Natural Science and Engineering Research Council of Canada
(NSERC). 
Part of this work was carried out while she was visiting
 the University of Southern Denmark.

\end{document}